\PassOptionsToPackage{dvipsnames}{xcolor}
\documentclass[acmsmall,screen,nonacm]{acmart}

\input{preamble.tex}

\providecommand{\textcite}{\citet}
\providecommand{\autocite}{\cite}

\begin{document}

\title{Bidirectional Type Slicing}

\author{Max Carroll}
\orcid{0009-0007-9636-8502}
\affiliation{
  \institution{University of Cambridge}
  \department{Department of Computer Science and Technology}
  \city{Cambridge}
  \country{UK}}
\author{Anil Madhavapeddy}
\orcid{0000-0001-8954-2428}
\affiliation{
  \institution{University of Cambridge}
  \department{Department of Computer Science and Technology}
  \city{Cambridge}
  \country{UK}}
\author{Cyrus Omar}
\orcid{}
\affiliation{
  \institution{University of Michigan}
  \city{Ann Arbor}
  \country{USA}}

\begin{abstract}
Development tools report \emph{what} type an expression has, but not \emph{why} it has that type. This paper develops a theory of \emph{type slicing} that answers such questions: a programmer selects a term, queries any part of the type information associated with it, and receives a slice of the program---a well-formed partial program with irrelevant sub-terms folded away---that suffices to reproduce the queried type. We formulate type slicing for bidirectional type systems, where \emph{synthesis slices} explain the type a term synthesises and \emph{analysis slices} explain the type its surrounding context expects. The theory requires no cast dynamics as it applies to any bidirectional system equipped with a precision order on types and terms satisfying a downwards static graduality property. We develop the metatheory over a core calculus with holes, products, sums, and explicit polymorphism, based on the Hazelnut and marked lambda calculi, proving that every query has a minimal slice, and that refining a query monotonically shrinks its minimal slices. We then show how to calculate these slices both exactly and approximately. Finally, integrating type slicing with error marking theory extends these results to arbitrary ill-typed programs, so a single mechanism explains both types and type errors in \textit{complete}, \textit{incomplete}, and \textit{erroneous} code. The metatheory is mechanised in Agda, and a linear-time approximation of type slicing is implemented for the Hazel programming environment.
\end{abstract}

\maketitle

\section{Introduction}
\label{sec:intro}
\label{firstcontentpage}

Programming often involves reasoning about static types.
Development tools assist programmers in limited ways with this task, 
e.g. by reporting the type of an expression when hovering over it, or reporting the expected and actual type when there is a type error.
However, these services only report \emph{what} the type of an expression is, not \emph{why} it is that type 
in the context of the surrounding program.
Programmers must hunt through the program themselves to answer these questions, and the program often contains a large volume of code irrelevant to their query.

This paper develops the theory of \textit{type slicing} from type-theoretic first principles. Type slices allow developers to ask ``why questions''~\cite{developers-ask-why-questions} about type information reported about arbitrary terms within a (possibly incomplete or erroneous) program. The answers are given in the form of a minimal slice of the original program (i.e. a program with unrelated terms folded away) that suffices to reproduce the queried type. 
Type slicing is inspired in principle by prior work on program slicing (\S\ref{sec:related-work}), in particular, lattice-based \textit{Galois slicing}~\cite{Perera2012,Ricciotti2017,PereraGarg2016,PereraVis2022,AtkeyPerera2025}. However, slicing is here applied to a (non-Galois) static type checking setting, loosely more akin to type-error slicing~\cite{Wand1986,HaackWells2003,TipDinesh2001}, and the concurrent independently researched work on \emph{type-directed slicing}~\cite{TypeDirectedSlicing} for Java.

We consider the problem of type slicing in the context of \emph{bidirectional type systems}~\cite{BidirectionalTypes}, which distinguish \emph{type synthesis} (which type does this term have) from \emph{type analysis} (which type is expected).
Correspondingly, a \textit{synthesis slice} (\S\ref{sec:synthesis}) of a selected term answers ``which parts of the program caused it to have type $\tau$?'' and an \textit{analysis slice} (\S\ref{sec:analysis}) answers the complementary question, ``which parts of the surrounding program enforced that this term have type $\tau$?''
Formalising analysis slices requires a novel \textit{context typing} construction (\S\ref{sec:context-typing}), which types the one-hole context surrounding a focused sub-term.
Queries can be \textit{refined}, say from an entire function type to just its codomain, resulting in a smaller slice. This allows a programmer to incrementally focus on sub-parts of the type that are of interest, for example, determining that only a small region of a large record type involved in a type error is erroneous.

\paragraph{Type Error Debugging} The type slicing machinery applies equally well to well-typed and ill-typed terms. In the case of ill-typed terms, it provides a mechanism that may aid in debugging type errors, subsuming the question of ``why is this term a type error?'' This is achieved by integrating type slicing with the theory of error marking (\S\ref{sec:marking}).
A distinguishing feature of type slicing is that a slice is itself a valid incomplete program that preserves the original's shape; in Hazel~\cite{hazel-website}, which allows incomplete programs, this is simply a valid program. Type slicing therefore \emph{composes} with most other type debugging methods, as we can simply run them on the minimised programs. 

For example, error \emph{localisation}~\cite{ZhangMyers2014,PavlinovicKingWies2014} and \emph{counter-factual}~\cite{ChenErwig2014,LernerSeminal2007,Chitil2001} type debugging techniques (\S\ref{sec:cmp-debuggers}) are equally applicable, and in fact simpler, on minimised programs, when their language assumptions (typically global inference) apply.

Further, editing the `type' of an expression as suggested by such tools is not necessarily easy when the surrounding context expects a different type. The \textit{analysis slice} points to the surrounding terms (typically annotations) that must be modified to make such an edit, likewise easing refactoring to a different type in general.

\vspace{-6pt}
\subsection{Contributions}
\label{sec:scope}
The theory applies to \textit{bidirectional type systems} with any form of \textit{downwards static graduality} on expressions (\S\ref{sec:background}), inspired by the gradual typing literature~\cite{GradualRefined} extended to incomplete programs, i.e. programs with holes (and marked errors). No cast dynamics are required, so type slicing applies to any bidirectional system upon assigning a precision relation on types and terms (\S\ref{sec:core-lattice}).

Concretely, this paper contributes:
{\em (i)} a core calculus with holes, products, sums, and explicit polymorphism, based on the Hazelnut and marked lambda calculi~\cite{Hazelnut2017,Marking2024}, with precision lattices satisfying graduality (\S\ref{sec:core-calculus});
{\em (ii)} \textit{synthesis slices} (\S\ref{sec:synthesis}), whose minimal slices exist for every query, refine monotonically, and decompose compositionally;
{\em (iii)} \textit{context typing} (\S\ref{sec:context-typing}), a novel judgement with totality, soundness, and gradual guarantees, enabling the dual \textit{analysis slices} (\S\ref{sec:analysis});
{\em (iv)} integration with error marking (\S\ref{sec:marking}), extending both slice forms to arbitrary ill-typed programs;
{\em (v)} term-minimal slices supporting compositional calculation, NP-hardness of minimum-size slicing, and a linear-time approximation (\S\ref{sec:algorithms}--\S\ref{sec:min-size});
and {\em (vi)} a comparison with prior forms of slicing and type debugging, and applications to set-theoretic and occurrence typing (\S\ref{sec:related-work}).

Hazel~\cite{hazel-website} is an example full-scale language satisfying these conditions; we have extended its implementation with an efficient linear-time approximation of type slicing, introduced by example (\S\ref{sec:slicing-by-example}). The prototype user interface is an initial proof of concept, and we make no specific usability claims; we expect the theory to support a variety of user interfaces and further applications for type debugging and explanation, and suggest several avenues for exploration (\S\ref{sec:further-work}).
The core calculus and metatheory are mechanised in Agda and available in the supplemental material.

\subsection{Type Slicing in Hazel by Example}
\label{sec:slicing-by-example}

{This section previews type slicing, formally developed later (\S\ref{sec:synthesis}, \S\ref{sec:analysis}) over the core calculus (\S\ref{sec:core-calculus}), through two worked examples in the Hazel implementation.
The Hazel UI has a cursor inspector at the bottom of the screen which details which type the selected term \textit{has}, and what it \textit{expects}; these are the respective types to calculate a \synsrc{synthesis} slice and an \anasrc{analysis} slice for.

Hazel renders each slice by folding away every fragment the queried type does not require. The user can refine their slice by themselves folding away part of the type being sliced, which will correspondingly fold away additional fragments of the program.}

\definecolor{annotcolor}{HTML}{955BE3}
\tikzset{
  cobase/.style={font=\scriptsize\sffamily, align=left, inner sep=2pt,
    rounded corners=1.5pt, fill=white, fill opacity=0.9, text opacity=1,
    line width=0.5pt},
  synco/.style={cobase, draw=NavyBlue, text=NavyBlue!75!black},
  anaco/.style={cobase, draw=OliveGreen!70!black, text=OliveGreen!45!black},
  foldco/.style={cobase, draw=black!45, text=black!55},
  errco/.style={cobase, draw=BrickRed, text=BrickRed!85!black},
  purpco/.style={cobase, draw=annotcolor, text=annotcolor!80!black},
  synl/.style={-{Stealth[length=1.5mm]}, densely dotted, line width=0.7pt, NavyBlue},
  anal/.style={-{Stealth[length=1.5mm]}, densely dotted, line width=0.7pt, OliveGreen!70!black},
  foldl/.style={-{Stealth[length=1.5mm]}, densely dotted, line width=0.7pt, black!45},
  errl/.style={-{Stealth[length=1.5mm]}, densely dotted, line width=0.7pt, BrickRed},
  purpl/.style={-{Stealth[length=1.5mm]}, densely dotted, line width=0.7pt, annotcolor},
}
\newcommand{\foldarrow}[1]{\par\vspace{0.7mm}
  {\footnotesize\sffamily\color{black!55}$\big\Downarrow$\ \ #1}\par\vspace{0.7mm}}

\begin{figure}[t]
\vspace{6pt}
\centering

\begin{tikzpicture}
\node[anchor=south west, inner sep=0] (img) at (0,0)
  {\includegraphics[width=0.9\linewidth]{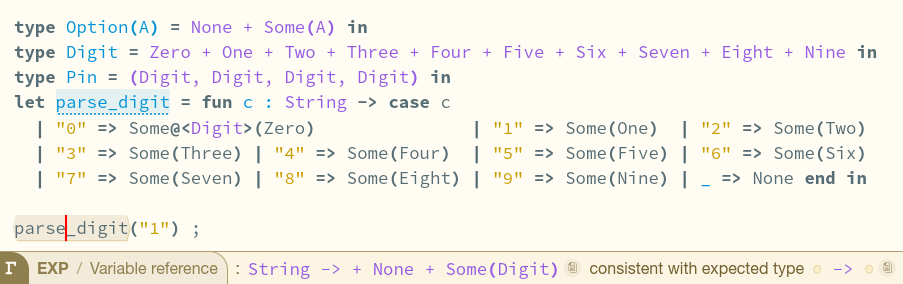}};
\draw[black!35, line width=0.35pt] (img.south west) rectangle (img.north east);
\begin{scope}[x={(img.south east)},y={(img.north west)}]
  \node[synco] (a1) at (0.24,0.29) {select \texttt{parse\_digit}};
  \draw[synl] (a1.west) -- (0.08,0.25);
  \node[synco] (a2) at (0.52,0.2) {which synthesises this type};
  \draw[synl] (a2.south) -- (0.43,0.07);
\end{scope}
\end{tikzpicture}

\foldarrow{fold to the synthesis slice}

\begin{tikzpicture}
\node[anchor=south west, inner sep=0] (img) at (0,0)
  {\includegraphics[width=0.9\linewidth]{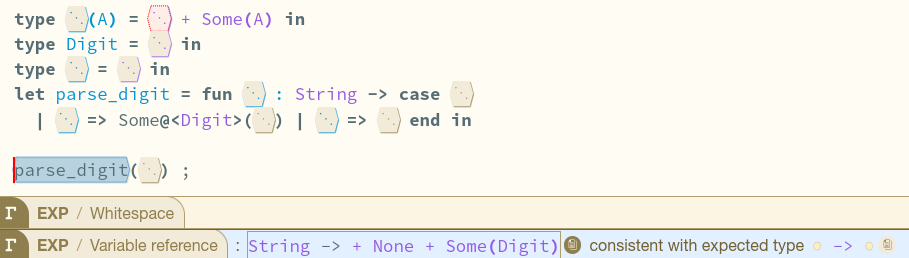}};
\draw[black!35, line width=0.35pt] (img.south west) rectangle (img.north east);
\begin{scope}[x={(img.south east)},y={(img.north west)}]
  \node[synco] (b1) at (0.62,0.23) {press slicing icon};
  \draw[synl] (b1.south) -- (0.63,0.08);
\end{scope}
\end{tikzpicture}

\caption{Synthesis slice for \synsrc{\texttt{parse\_digit}}.}
\label{fig:parse-digit-slices}
\label{fig:parse-digit-synthesis}
\end{figure}

\Cref{fig:parse-digit-synthesis} slices a small character parser. The synthesis slice reduces the function to the \texttt{String} domain annotation and the first branch determining the result option type, discarding all other branches.\footnote{Folding all other branches into a single fold projector is not currently supported by the Hazel UI, but is shown here for illustrative purposes.}

Refined queries may target \emph{part} of a reported type: for example, asking only about the domain of a function type, by folding away the returned sum type (\synsrc{\texttt{String -> }$\fold$}). The resulting program shrinks the slice to exclude the entire function body (\cref{fig:parse-digit-analysis}). This same slice is also the minimal \textit{analysis} slice which \textit{enforces} \anasrc{\texttt{String}} on the argument.
\begin{figure}[h]
\centering

\begin{subfigure}[t]{0.49\linewidth}
\centering
\begin{tikzpicture}
\node[anchor=south west, inner sep=0] (img) at (0,0)
  {\includegraphics[width=\linewidth]{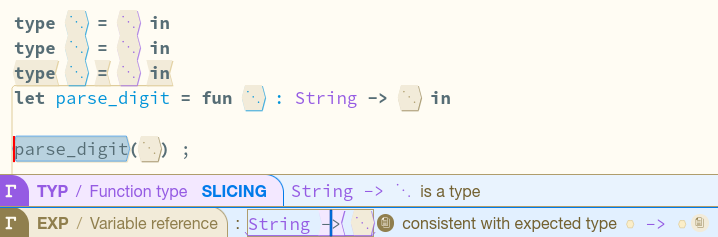}};
\draw[black!35, line width=0.35pt] (img.south west) rectangle (img.north east);
\begin{scope}[x={(img.south east)},y={(img.north west)}]
  \node[synco] (d1) at (0.54,0.39) {fold result type within query};
  \draw[synl] (d1.south) -- (0.51,0.062);
\end{scope}
\end{tikzpicture}
\caption{Refined synthesis query omitting the return Option type.}
\label{fig:parse-digit-domain-query}
\end{subfigure}\hfill
\begin{subfigure}[t]{0.49\linewidth}
\centering
\begin{tikzpicture}
\node[anchor=south west, inner sep=0] (img) at (0,0)
  {\includegraphics[width=\linewidth]{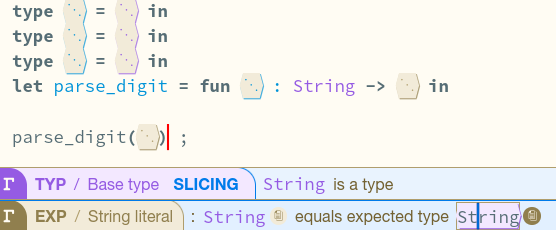}};
\draw[black!35, line width=0.35pt] (img.south west) rectangle (img.north east);
\begin{scope}[x={(img.south east)},y={(img.north west)}]
  \node[anaco] (c1) at (0.56,0.4) {selecting argument instead};
  \draw[anal] (c1.west) -- (0.26,0.41);
  \node[anaco] (c2) at (0.72,0.86) {annotation enforcing \texttt{String}\\upon the argument};
  \draw[anal] (c2.south) -- (0.60,0.645);
\end{scope}
\end{tikzpicture}
\caption{Analysis slice of the \anasrc{\texttt{String}} expectation on the argument of \synsrc{\texttt{parse\_digit}}.}
\label{fig:parse-digit-argument-analysis}
\end{subfigure}

\caption{Refined synthesis and analysis slices.}
\label{fig:parse-digit-analysis}
\end{figure}

\clearpage

\begin{figure}[t]
\centering

\begin{tikzpicture}
\node[anchor=south west, inner sep=0] (img) at (0,0)
  {\includegraphics[width=0.9\linewidth, trim={0 0 0 62}, clip]{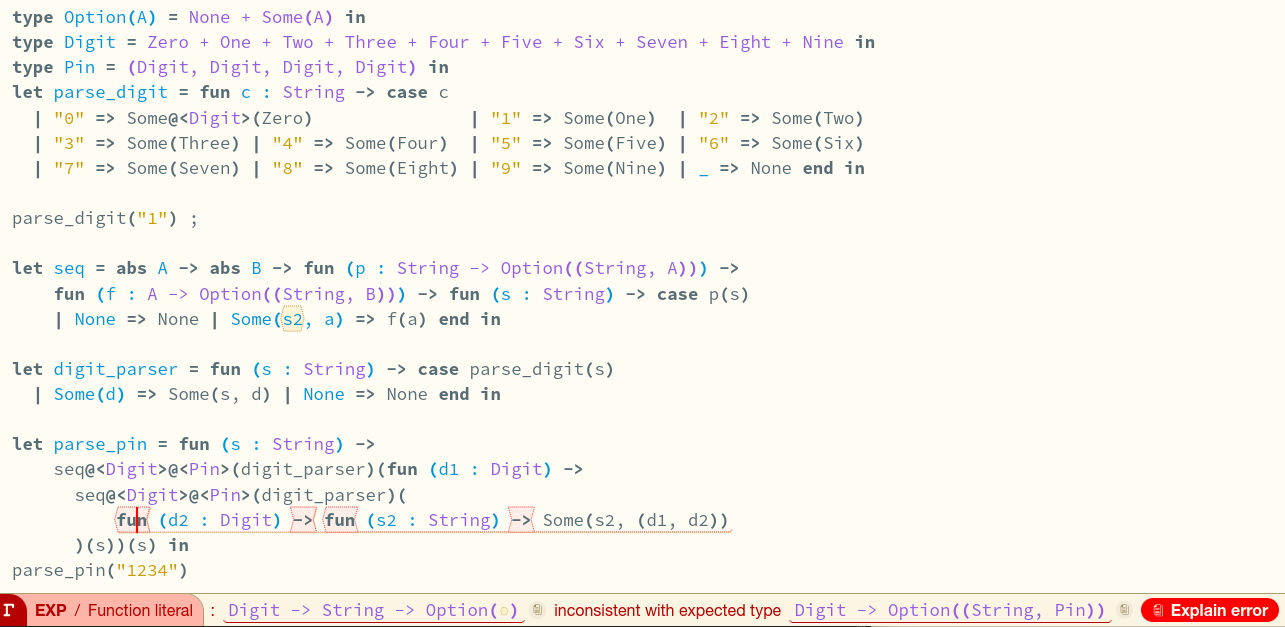}};
\draw[black!35, line width=0.35pt] (img.south west) rectangle (img.north east);
\begin{scope}[x={(img.south east)},y={(img.north west)}]
  \node[errco] (e1) at (0.80,0.63) {selected curried function synthesises\\
    \texttt{Digit -> String -> Option(\ldots)}};
  \draw[errl] (e1.south west) -- (0.4,0.202);
  \node[anaco] (e2) at (0.8,0.4) {\ldots but its position demands\\
    \texttt{Digit -> Option((String, Pin))}};
  \draw[anal] (e2.south) -- (0.68,0.040);
  \node[errco] (e3) at (0.85,0.150) {press \emph{Explain error}};
  \draw[errl] (e3.south) -- (0.92,0.046);
\end{scope}
\end{tikzpicture}

\foldarrow{explain the error}

\begin{tikzpicture}
\node[anchor=south west, inner sep=0] (img) at (0,0)
  {\includegraphics[width=0.9\linewidth, trim={0 0 0 57}, clip]{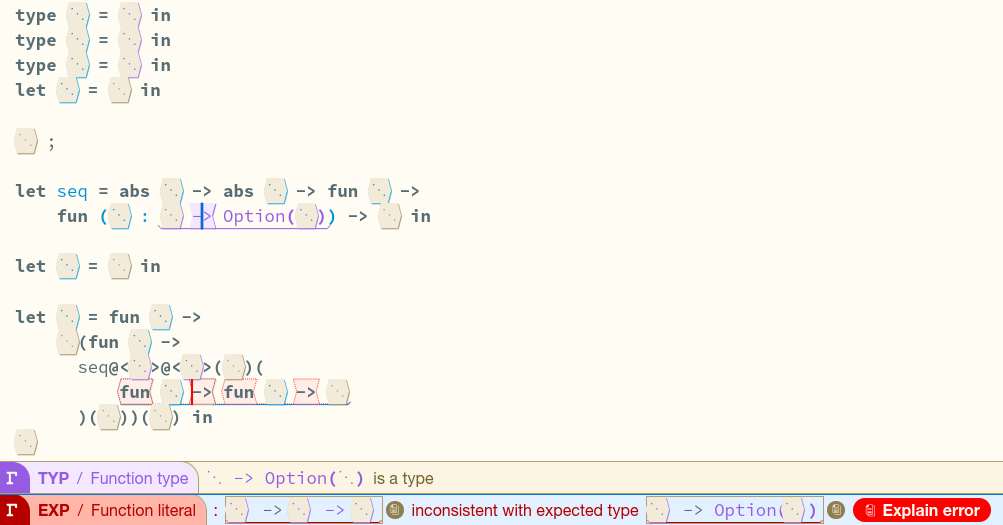}};
\draw[black!35, line width=0.35pt] (img.south west) rectangle (img.north east);
\begin{scope}[x={(img.south east)},y={(img.north west)}]
  \node[purpco] (f1) at (0.55,0.585) {source of the (erroneous) expected type:\\
    \texttt{seq}'s annotation \texttt{A -> Option((String, B))}};
  \draw[purpl] (f1.west) -- (0.2,0.66);
  \node[errco] (f2) at (0.62,0.425) {the (correct) function literals marked\\ as an error in standard localisation};
  \draw[errl] (f2.west) -- (0.3,0.32);
  \node[foldco] (f3) at (0.62,0.257) {everything else\\folds away};
\end{scope}
\end{tikzpicture}

\caption{Explain-error slice for the parser combinator.}
\label{fig:explain-error}
\end{figure}

{\Cref{fig:explain-error} considers a much more complex error where a sequencing parser combinator has been defined incorrectly, but used correctly. As is usual, the error has been incorrectly localised to the usage site, trusting the (incorrect) annotation. Specifically, \texttt{seq}'s continuation is annotated as if it returns an \texttt{Option} immediately, rather than first accepting the remaining \texttt{String} threaded out of the previous parser. The use site provides the intended parser continuation.

\emph{Explain error} queries the slices on only the outermost inconsistency in the types (the function vs the option in the return type); the resulting slice collapses the program to just the function shape of the continuation and the offending annotation.

Crucially, the slice did not assume the annotation is the ground truth. It shows the user both the expectation's \emph{source} (the annotation) and the \emph{usage} (the function), leaving the programmer to judge which is wrong. This is justified as programmers often add annotations incorrectly to existing code (49\% of the time~\cite{TypeAnnotationStudy}).} Should the user instead wish to trust the context unilaterally, they can simply only consider the synthesis slice (omitting the analysis slice entirely). We later make slicing of type errors precise, applying slicing to marked ill-typed programs (\S\ref{sec:marking}).

\clearpage

\section{Language Scope}
\label{sec:background}

This section introduces the key characteristics of the languages to which type slicing can be applied.

\subsection{Bidirectional Types}
\label{sec:bidirectional-types}

Type slicing is formalised for \textit{bidirectional type systems}~\cite{BidirectionalTypes}. Such a type system is an algorithmic specification of typing judgements, naturally lending itself to direct typing algorithms which do not require \textit{guessing} of types during type checking. These systems do require annotations, but annotation burden is lessened by locally inferring types~\cite{LocalInference}.

This is achieved by specifying the \textit{mode} of the type parameter in a typing judgement, distinguishing when it is an \textit{input} (type analysis/checking) and when it is an \textit{output} (type synthesis):
\[\synthesis{e}{\tau}\]
Read as: \textit{$e$ synthesises type $\tau$ under typing assumptions $\Gamma$ and type variable assumptions $\Delta$}. The type $\tau$ is an \textit{output}.
\[\analysis{e}{\tau}\]
Read as: \textit{$e$ analyses against type $\tau$ under typing assumptions $\Delta, \Gamma$}. The type $\tau$ is an \textit{input}. This is where the terminology of \textit{synthesis / analysis} slices is derived from.

Such languages have two fundamental rules:
\[\inference[\textsc{SVar}]{x : \tau \in \Gamma}{\synthesis{x}{\tau}} \qquad
  \inference[\textsc{Sub}]{\synthesis{e}{\tau}}{\analysis{e}{\tau}}\]
Variables synthesise their type from the typing assumptions. Subsumption bridges the two modes: any term that can synthesise a type also analyses against that same type.

\subsection{Downwards Static Graduality}
\label{sec:gradual-types}

To calculate a type slice we will need a way to \textit{gradually} omit sub-expressions from the program, and type check these less \textit{`precise'} partial programs at less or equally `precise' types. Formally, precision must form two partial orders (notated $\sqsubseteq$) over types and expressions, which can be lifted onto typing assumptions $\Gamma$.

\begin{definition} A bidirectional calculus satisfies \textit{downwards static graduality} if, for all $\Delta, \Gamma', \Gamma, e', e, \tau$ such that $\Gamma' \sqsubseteq \Gamma$ and $e' \sqsubseteq e$, both:
\begin{itemize}
  \item If $\synthesis{e}{\tau}$ then there exists $\tau' \sqsubseteq \tau$ such that $\synthesis[\Delta;\ \Gamma']{e'}{\tau'}$.
\item For all $\tau' \sqsubseteq \tau$, if $\analysis{e}{\tau}$ then $\analysis[\Delta; \Gamma']{e'}{\tau'}$.
\end{itemize}
\end{definition}

For example, this precision order can correspond to the precision relation on types in a gradually typed system \cite{GradualFunctional}. In this situation this graduality theorem corresponds to the downwards static case\footnote{Hence the name.} of the graduality theorem in the refined criteria for gradual types \cite{GradualRefined}, adapted to a bidirectional setting. The calculus we demonstrate type slicing on in this paper goes even further by allowing maximally incomplete programs, omitting any program sub-expression, based upon the Hazel calculus~\cite{HazelLivePaper}.

Although not necessary, it is also useful that precision forms a lattice with meets and joins to perform optimisations to type slicing calculation (\S\ref{sec:algorithms}).

\section{The Core Calculus}
\label{sec:core-calculus}

In this section, we present the core calculus which type slicing builds upon. It is a bidirectional, gradually typed lambda calculus, with precision on types and terms, based on the Hazel calculus~\cite{HazelLivePaper}, extended with products, sums, and explicit System~F polymorphism. We prove that this calculus satisfies graduality (\S\ref{sec:gradual-types}), and we define and prove several lattice structures on precision.

\subsection{Syntax \& Relations}
\label{sec:core-syntax}

The syntax of the core calculus is given in Figure~\ref{fig:syntax}. $1$ is the unit type, and $\gap$ is a gap / hole serving a dual role: in the gradual typing interpretation it is the dynamic type, and in the slicing interpretation it represents unrequired type information (to explain a query). Similarly a $\gap$ in expressions represents omitted sub-expressions, irrelevant to the explanation resulting from a query. In type checking, omitted expressions synthesise the dynamic (omitted) type.

\begin{figure}[H]
\fbox{Syntax}\ \ \ Types $\tau$ and expressions $e$
\begin{align*}
\tau &::= \gap \mid 1 \mid \tau \funarr \tau \mid \tau \times \tau \mid \tau + \tau \mid \forall\alpha.\;\tau \mid \alpha \\[4pt]
e &::= \gap \mid () \mid x \mid \lambda x \mathbin{:} \tau.\; e \mid \lambda x.\; e \mid e(e) \mid (e,\; e) \mid \pi_1\; e \mid \pi_2\; e \\
  &\quad\mid\; \iota_1\; e \mid \iota_2\; e \mid (\mathbf{case}\;e\;\mathbf{of}\;x.\;e \mid y.\;e) \mid \Lambda\alpha.\; e \mid e\langle\tau\rangle \mid \mathbf{let}\;x = e\;\mathbf{in}\;e
\end{align*}

\caption{Syntax of the core calculus up to $\alpha$-equivalence (on variables $x$).}
\label{fig:syntax}
\end{figure}

Base types such as booleans are omitted from the core syntax, being encoded using sums in the usual way (e.g.\ $\mathit{Bool} \triangleq 1 + 1$); we use them freely in examples.

\subsubsection{Consistency}

Type consistency (Figure~\ref{fig:consistency}) is exactly as presented in standard gradual type systems (\S\ref{sec:gradual-types}). That is, a weakening of equality, where every type is consistent with $\gap$, and compound types are consistent when their components are. Consistency is reflexive and symmetric but \textit{not} transitive.

\begin{figure}[H]
\fbox{$\tau_1 \sim \tau_2$}\ \ \ $\tau_1$ is consistent with $\tau_2$
\[\inference[$\sim\gap_1$]{}{\tau \sim \gap} \qquad
  \inference[$\sim\gap_2$]{}{\gap \sim \tau} \qquad
  \inference[$\sim{1}$]{}{1 \sim 1} \qquad
  \inference[$\sim{\funarr}$]{\tau_1 \sim \tau_1' & \tau_2 \sim \tau_2'}{\tau_1 \funarr \tau_2 \sim \tau_1' \funarr \tau_2'}\]
\[\inference[$\sim{\times}$]{\tau_1 \sim \tau_1' & \tau_2 \sim \tau_2'}{\tau_1 \times \tau_2 \sim \tau_1' \times \tau_2'} \qquad
  \inference[$\sim{+}$]{\tau_1 \sim \tau_1' & \tau_2 \sim \tau_2'}{\tau_1 + \tau_2 \sim \tau_1' + \tau_2'} \qquad
  \inference[$\sim{\forall}$]{\tau \sim \tau'}{\forall\alpha.\;\tau \sim \forall\alpha.\;\tau'}\]
\caption{Type consistency.}
\label{fig:consistency}
\end{figure}

\subsubsection{Precision}

Precision (Figure~\ref{fig:precision}) is a partial order that organises terms by \textit{information content}: for types, $\tau' \sqsubseteq \tau$ means $\tau$ has at least as much type information as $\tau'$. Here, $\gap$ is the global minimum. Expression precision is defined analogously; for instance, the slices of $\lambda x \mathbin{:} 1.\; x$ form the precision chain $\gap \sqsubseteq \lambda x \mathbin{:} \gap.\; \gap \sqsubseteq \lambda x \mathbin{:} 1.\; \gap \sqsubseteq \lambda x \mathbin{:} 1.\; x$, each step filling in one more part of the term.

\begin{figure}[H]
\fbox{$\tau \sqsubseteq \tau'$}\ \ \ $\tau$ is less precise than $\tau'$
\[\inference[$\sqsubseteq\gap$]{}{\gap \sqsubseteq \tau} \qquad
  \inference[$\sqsubseteq{1}$]{}{1 \sqsubseteq 1} \qquad
  \inference[$\sqsubseteq{\alpha}$]{}{\alpha \sqsubseteq \alpha} \qquad
  \inference[$\sqsubseteq{\funarr}$]{\tau_1 \sqsubseteq \tau_1' & \tau_2 \sqsubseteq \tau_2'}{\tau_1 \funarr \tau_2 \sqsubseteq \tau_1' \funarr \tau_2'}\]
\[\inference[$\sqsubseteq{\times}$]{\tau_1 \sqsubseteq \tau_1' & \tau_2 \sqsubseteq \tau_2'}{\tau_1 \times \tau_2 \sqsubseteq \tau_1' \times \tau_2'} \qquad
  \inference[$\sqsubseteq{+}$]{\tau_1 \sqsubseteq \tau_1' & \tau_2 \sqsubseteq \tau_2'}{\tau_1 + \tau_2 \sqsubseteq \tau_1' + \tau_2'} \qquad
  \inference[$\sqsubseteq{\forall}$]{\tau \sqsubseteq \tau'}{\forall\alpha.\;\tau \sqsubseteq \forall\alpha.\;\tau'}\]
\caption{Type precision.}
\label{fig:precision}
\end{figure}

\paragraph{Precision on Typing Assumptions.} In this calculus, we represent typing assumptions as \textit{finite} partial functions from variable names to types, with domain $\mathrm{dom}(\Gamma)$. Precision is defined by domain inclusion and pointwise precision on types in the (smaller) domain:
\[\gamma_1 \sqsubseteq \gamma_2 \iff \mathrm{dom}(\gamma_1) \subseteq \mathrm{dom}(\gamma_2) \text{ and } \gamma_1(x) \sqsubseteq \gamma_2(x) \text{ for all } x \in \mathrm{dom}(\gamma_1)\]
That is, a less precise type assumptions set either assumes fewer variables' types, or assumes less precise types on the present variables.

\begin{remark}\label{rem:assms-mechanisation}
Typical literature refers to these \textit{typing assumptions} as `typing contexts'. We avoid this terminology as we use the \textit{context} to refer to the syntactic context around a subterm, i.e. in the same sense as used in contextual dynamics (\textcite[Chapter~5]{Harper2016}).
\end{remark}

\subsection{Lattice Properties}
\label{sec:core-lattice}

\subsubsection{Meets and Joins}
\label{sec:core-meets-joins}

The \textit{meet} $\tau_1 \sqcap \tau_2$ is the greatest lower bound (GLB) under precision, retaining only sub-terms present in both terms. The \textit{join} $\tau_1 \sqcup \tau_2$ is the least upper bound (LUB) under precision, omitting only subterms omitted in both components. For example:
\[(1 \funarr \gap) \sqcap (\gap \funarr 1) \;=\; \gap \funarr \gap, \qquad\quad (1 \funarr \gap) \sqcup (\gap \funarr 1) \;=\; 1 \funarr 1.\]
Meets are \textit{total}, but joins are defined only between \textit{consistent} types.

\subsubsection{The Slice Lattice}
\label{sec:slice-lattice}

For a fixed type $\tau$, the \textit{slice lattice} of $\tau$ is the \textit{lower set} under precision:
\[\lfloor\tau\rfloor \;=\; \{\,\upsilon \mid \upsilon \sqsubseteq \tau\,\}\]
$\lfloor e \rfloor$, $\lfloor \C \rfloor$, and $\lfloor \Gamma \rfloor$ are defined analogously for expressions, contexts, and assumptions. Each slice lattice has top $\top = \tau$ and bottom $\bot = \gap$. Then precision, meets, and joins simply lift unchanged into the slice lattice. Importantly all terms in the slice lattice are consistent, so joins are always defined:

\begin{lemma}[Slices are Mutually Consistent]\label{lem:slices-consistent}
If\/ $\upsilon_1 \sqsubseteq \tau$ and $\upsilon_2 \sqsubseteq \tau$, then $\upsilon_1 \sim \upsilon_2$.
\end{lemma}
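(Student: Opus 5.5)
The plan is to prove the statement by structural induction on $\tau$, and within each case to invert both precision derivations $\upsilon_1 \sqsubseteq \tau$ and $\upsilon_2 \sqsubseteq \tau$ using the rules of Figure~\ref{fig:precision}. An equivalent alternative is induction on the derivation of $\upsilon_1 \sqsubseteq \tau$, followed by inversion on $\upsilon_2 \sqsubseteq \tau$. I will phrase everything for types; the analogous statements for expressions, contexts and assumptions in $\lfloor e\rfloor$, $\lfloor\C\rfloor$, $\lfloor\Gamma\rfloor$ follow the same pattern, pointwise on the domain for assumptions.

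The first step disposes of holes. If either $\upsilon_1 = \gap$ or $\upsilon_2 = \gap$, the goal follows immediately by $\sim\gap_2$ or $\sim\gap_1$ respectively. This covers every use of rule $\sqsubseteq\gap$, including the case $\tau = \gap$, where inversion forces both slices to be $\gap$.

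Otherwise neither slice is $\gap$, and inversion says both were derived by the same structural rule matching the head constructor of $\tau$. The cases are then as follows.
\begin{itemize}
  \item If $\tau = 1$, both slices are $1$, and $\sim 1$ applies.
  \item If $\tau = \tau_a \funarr \tau_b$, then $\upsilon_i = \upsilon_{ia} \funarr \upsilon_{ib}$ with $\upsilon_{ia} \sqsubseteq \tau_a$ and $\upsilon_{ib} \sqsubseteq \tau_b$. The induction hypothesis on $\tau_a$ and on $\tau_b$ gives componentwise consistency, and $\sim{\funarr}$ closes the case.
  \item Products and sums are handled identically, using $\sim{\times}$ and $\sim{+}$.
  \item If $\tau = \forall\alpha.\,\tau_0$, both slices have the form $\forall\alpha.\,\upsilon_{i0}$. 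I would work up to $\alpha$-equivalence, or use the de Bruijn representation as in the Agda mechanisation, so the binders coincide. The induction hypothesis then feeds into $\sim{\forall}$.
\end{itemize}

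The main obstacle is the case $\tau = \alpha$. Rule $\sqsubseteq\alpha$ yields $\upsilon_1 = \upsilon_2 = \alpha$, but Figure~\ref{fig:consistency} as displayed has no rule $\alpha \sim \alpha$, so it seems to be missing. I would handle this in one of two ways:
\begin{itemize}
  \item assume the evident reflexive rule for type variables, which is surely present in the mechanisation since consistency is stated to be reflexive; or
  \item derive it from a separate lemma that consistency is reflexive, and cite that lemma.
\end{itemize}
With that settled, all cases close directly, and no lattice structure is needed beyond the inversion principles of precision.
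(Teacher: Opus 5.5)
Your proposal is correct and is the standard argument. The paper states this lemma without a written proof (it is discharged in the Agda mechanisation), and induction on $\tau$ with inversion of both precision derivations suffices: the hole cases close by $\sim\gap_1$ and $\sim\gap_2$, and the structural cases close by the induction hypothesis. You are also right that the $\tau = \alpha$ case needs a rule $\alpha \sim \alpha$ that Figure~\ref{fig:consistency} omits but that the paper's claim of reflexivity presupposes; your second option (citing reflexivity) is not independent of the first, since reflexivity itself fails at $\alpha$ without that same rule.
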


Hence, all terms in the slice lattice are consistent, making it a well-defined \textit{bounded lattice}. The meet and join are also distributive. Further, slice lattices are trivially finite:

\begin{theorem}\label{thm:slice-finite}
Each slice lattice $\lfloor\tau\rfloor$, $\lfloor e \rfloor$, $\lfloor \C \rfloor$, and $\lfloor \Gamma \rfloor$ is finite.
\end{theorem}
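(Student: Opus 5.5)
The plan is a structural induction on the term giving an explicit finite bound on $|\lfloor\cdot\rfloor|$, using that precision is \emph{shape-preserving up to holes}. The key observation is an inversion property of the precision rules (Figure~\ref{fig:precision} and its analogue on expressions): if $\upsilon \sqsubseteq \tau$ then either $\upsilon = \gap$, or $\upsilon$ has the same head constructor as $\tau$ with each immediate subterm of $\upsilon$ below the corresponding subterm of $\tau$. Leaves such as $1$ or $\alpha$ only have themselves above themselves. Bound variables are handled up to $\alpha$-equivalence, so binders in $\upsilon$ may be taken to coincide with those of $\tau$.

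\textbf{Types.} By induction on $\tau$, using inversion, I will show
\[
\lfloor\tau\rfloor \subseteq \{\gap\} \cup \{\, c(\upsilon_1,\dots,\upsilon_n) \mid \upsilon_i \in \lfloor\tau_i\rfloor \,\}
\]
when $\tau = c(\tau_1,\dots,\tau_n)$. This gives $|\lfloor\tau\rfloor| \le 1 + \prod_i |\lfloor\tau_i\rfloor|$. The base cases $\gap$, $1$, $\alpha$ have slice lattices of size at most $2$. The $\forall\alpha.\,\tau$ case has a single child with a fixed binder.

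\textbf{Expressions.} The same argument applies, where the children of a node include both its expression subterms and its type annotations; for example, $\lambda x\mathbin{:}\tau.\,e$ has children $\tau$ and $e$, and $e\langle\tau\rangle$ has children $e$ and $\tau$. The finiteness of $\lfloor\tau\rfloor$ from the types case is fed in at those annotation positions. One subtlety is whether expression precision relates $\lambda x.\,e$ to $\lambda x\mathbin{:}\tau.\,e'$, or relates terms of different constructors other than via $\gap$. If the expression precision rules allow such extra cases, I add them to the union explicitly; the set stays finite because each rule relates a term $e$ only to terms built from finitely many constructor choices over slices of its subterms.

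\textbf{Contexts and assumptions.} A context $\C$ is an expression with one distinguished hole, so the expression argument carries over verbatim, with the context hole treated as a leaf. (If contexts may also omit the hole position itself, this adds only finitely many further shapes.) For $\lfloor\Gamma\rfloor$, by the definition of precision on assumptions, any $\gamma \sqsubseteq \Gamma$ is determined by a domain $D \subseteq \mathrm{dom}(\Gamma)$ together with a choice of $\gamma(x) \in \lfloor\Gamma(x)\rfloor$ for each $x \in D$. Since $\Gamma$ is a finite partial function, this gives
\[
|\lfloor\Gamma\rfloor| \le \sum_{D \subseteq \mathrm{dom}(\Gamma)} \prod_{x\in D} |\lfloor\Gamma(x)\rfloor|,
\]
which is finite.

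The main obstacle is the inversion lemma for expression precision, since the expression rules are only described as ``analogous''. The work lies in checking case by case that no rule relates a term to one of unbounded shape. That is why I would state inversion explicitly as a lemma before doing the induction; the counting steps after it are routine.
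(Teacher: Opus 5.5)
Your proof is correct and is the routine structural argument the paper relies on. The paper itself gives no written proof beyond calling the slice lattices ``trivially finite'' (deferring to the Agda mechanisation, and appealing to enumeration of slices in the proof of \cref{thm:min-exists}); your inversion lemma plus counting induction is that enumeration made explicit. Two small points: context precision fixes the focus position, so your contingency about omitting the hole never arises (the bottom of $\lfloor\C\rfloor$ is $\gap_{\C}$, not $\gap$), and for leaves you mean that the only non-$\gap$ element \emph{below} $1$ or $\alpha$ is itself.
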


\begin{theorem}\label{thm:slice-lattices}
Each slice lattice $\lfloor\tau\rfloor$, $\lfloor e \rfloor$, $\lfloor \C \rfloor$, and $\lfloor \Gamma \rfloor$ is a bounded distributive lattice.
\end{theorem}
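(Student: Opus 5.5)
The plan is to reduce the structure to a simple inductive description of the lower set, then check the lattice laws by structural induction on the top element. Let $\tau$ be a fixed type; the same argument applies to $e$, $\C$ and $\Gamma$. Inverting the rules of Figure~\ref{fig:precision}, every $\upsilon \sqsubseteq \tau$ is either $\gap$ or has the same head constructor as $\tau$ with each immediate subterm a slice of the corresponding subterm of $\tau$. This gives a recursive characterisation:
\[
\lfloor\tau\rfloor \;=\; \{\gap\} \,\cup\, \{\, c(\upsilon_1,\dots,\upsilon_n) \mid \upsilon_i \in \lfloor\tau_i\rfloor \,\}
\qquad\text{when } \tau = c(\tau_1,\dots,\tau_n),
\]
where leaves ($1$, $\alpha$, $()$, $x$) have $n=0$. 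Binders ($\forall\alpha$, $\lambda x$, case branches) and annotations (for example $\lambda x \mathbin{:} \tau.\,e$) are just further children. Up to $\alpha$-equivalence the bound names are fixed by the top element, so they play no role.

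Via this characterisation, $\lfloor\tau\rfloor$ is order-isomorphic to the \emph{lifted product} $\mathbf{1}_\bot \oplus \prod_i \lfloor\tau_i\rfloor$. In other words, it is a fresh bottom $\gap$ adjoined beneath the product of the children's slice lattices.

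\begin{enumerate}
\item \textbf{Lattice and bounds.} For meets, Section~\ref{sec:core-meets-joins} gives a total greatest lower bound. For joins, it gives a least upper bound whenever the arguments are consistent, and Lemma~\ref{lem:slices-consistent} makes any two slices consistent. I still need to check that both operations stay inside $\lfloor\tau\rfloor$. The meet is below $\upsilon_1 \sqsubseteq \tau$, hence in the set by transitivity. The join is below $\tau$ because $\tau$ is an upper bound of both arguments and the join is least. The bounds are $\top = \tau$ (reflexivity) and $\bot = \gap$ (rule $\sqsubseteq\gap$).
\item \textbf{Distributivity.} I will prove $\upsilon_1 \sqcap (\upsilon_2 \sqcup \upsilon_3) = (\upsilon_1 \sqcap \upsilon_2) \sqcup (\upsilon_1 \sqcap \upsilon_3)$ by induction on $\tau$. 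In a lattice this law implies its dual, so it is the only one needed.
  \begin{itemize}
  \item If any $\upsilon_i$ is $\gap$, the identity collapses to absorption or identity laws. For instance, if $\upsilon_2 = \gap$ both sides equal $\upsilon_1 \sqcap \upsilon_3$. If $\upsilon_1 = \gap$ both sides are $\gap$.
  \item Otherwise all three share the head constructor of $\tau$. Meet and join then act componentwise, which I would read off their structural definitions, and the identity follows from the induction hypotheses on each $\lfloor\tau_i\rfloor$.
  \end{itemize}
  Put abstractly: products of distributive lattices are distributive, and adjoining a new bottom preserves distributivity.
\item \textbf{Contexts and assumptions.} For $\lfloor\C\rfloor$, I treat the hole marker as a leaf and reuse the argument above. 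For $\lfloor\Gamma\rfloor$, where $\Gamma$ is a finite partial map, a slice is a map on a subdomain with pointwise-smaller types. So $\lfloor\Gamma\rfloor \cong \prod_{x\in\mathrm{dom}\,\Gamma} (\{\text{absent}\}\oplus\lfloor\Gamma(x)\rfloor)$, with absent as a new bottom. This is again a product of lifted distributive lattices, with meet and join computed pointwise and on domain intersection and union respectively.
\end{enumerate}

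The main obstacle is not the order theory but the bookkeeping. I must confirm that the paper's meet and join coincide with the componentwise operations under the lifted-product isomorphism, in particular that no case returns a non-$\gap$ result with mismatched constructors. I also need to handle the two lambda forms: I will treat $\lambda x.\,e$ as a separate constructor that does not lie below $\lambda x \mathbin{:} \tau.\,e$, so it never arises in such a lower set. That needs checking against the expression precision rules. Once the componentwise description is established, every lattice law reduces to the corresponding law in the children by the induction above.
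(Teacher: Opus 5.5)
Your proposal is correct and follows essentially the paper's route. You restrict the globally defined meet and join to the lower set, use Lemma~\ref{lem:slices-consistent} to make joins total there, and read off the bounds. You then get distributivity by structural induction on the top element. Your lifted-product description is a clean way to package that induction, and your treatment of $\lfloor\Gamma\rfloor$, with ``absent'' as a fresh bottom below $x \mathbin{:} \gap$, is right.

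One step needs correcting: for $\lfloor\C\rfloor$ you cannot ``treat the hole marker as a leaf and reuse the argument above''. Context precision keeps the focus in the same position, and $\gap$ is not a context. So neither $\cmark$ nor any node on the path from the root to it can be replaced by $\gap$, and your recipe would adjoin spurious fresh bottoms there. Along that spine, $\lfloor\C\rfloor$ is instead a plain, unlifted product of the slice lattices of the off-spine subexpressions and type annotations. Its bottom is the structural context $\gap_{\C}$ of Figure~\ref{fig:contexts} rather than $\gap$. It is still bounded distributive by the product fact you already invoke.
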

Moreover, being finite and distributive, each slice lattice is bi-Heyting~\cite{DaveyPriestley}. We will later introduce and use co-Heyting subtraction to represent `disjoint' types (\S\ref{sec:algorithms}).

\paragraph{Notation: Slice Lattices as Highlighted Terms.} An element of a slice lattice can be notated as a highlight of the original (top) term with omitted regions (the \gap{}s) left unhighlighted. For instance, the slice $\mathit{Int} \funarr \gap$ in $\lfloor\mathit{Int} \funarr \mathit{Bool}\rfloor$ can be represented as $\sli{\mathit{Int} \funarr}\, \mathit{Bool}$.

\subsection{Typing Rules}
\label{sec:core-typing}

The bidirectional typing judgements are mostly standard; the complete synthesis and analysis rule sets appear in the supplementary material, and Figure~\ref{fig:typing-rules-excerpt} presents some of the major rules. $\wf{\tau}$ denotes when a type $\tau$ is \textit{well-formed} under available type variables $\Delta$.

\begin{figure}[h!t]
\small
\[\inference[\synr{Var}]{x : \tau \in \Gamma}{\synthesis{x}{\tau}} \qquad
  \inference[\synr{$\lambda$Ann}]{\wf{\tau_1} & \synthesis[\Delta;\,\Gamma, x \mathbin{:} \tau_1]{e}{\tau_2}}{\synthesis{\lambda x \mathbin{:} \tau_1.\; e}{\tau_1 \funarr \tau_2}}\]
\[\inference[\synr{let}]{\synthesis{e_1}{\tau_1} & \synthesis[\Delta;\,\Gamma, x \mathbin{:} \tau_1]{e_2}{\tau_2}}{\synthesis{\mathbf{let}\;x = e_1\;\mathbf{in}\;e_2}{\tau_2}} \qquad
  \inference[\synr{$\Lambda$}]{\synthesis[\Delta, \alpha;\,\Gamma]{e}{\tau}}{\synthesis{\Lambda\alpha.\; e}{\forall\alpha.\;\tau}}\]
\[\inference[\synr{App}]{\synthesis{e_1}{\tau} & \tau \sqcup (\gap \funarr \gap) \equiv \tau_1 \funarr \tau_2 & \analysis{e_2}{\tau_1}}{\synthesis{e_1(e_2)}{\tau_2}}\]
\[\inference[\synr{TApp}]{\wf{\sigma} & \synthesis{e}{\tau} & \tau \sqcup \forall\alpha.\;\gap \equiv \forall\alpha.\;\tau'}{\synthesis{e\langle\sigma\rangle}{[\alpha \mapsto \sigma]\,\tau'}}\]
\[\inference[\synr{case}]{\synthesis{e}{\tau} & \tau \sqcup (\gap + \gap) \equiv \tau_1 + \tau_2 & \tau_1' \sim \tau_2' \\ \synthesis[\Delta;\,\Gamma, x \mathbin{:} \tau_1]{e_1}{\tau_1'} & \synthesis[\Delta;\,\Gamma, y \mathbin{:} \tau_2]{e_2}{\tau_2'}}{\synthesis{\mathbf{case}\;e\;\mathbf{of}\;x.\;e_1 \mid y.\;e_2}{\tau_1' \sqcup \tau_2'}}\]
\caption{Selected synthesis rules.}
\label{fig:typing-rules-excerpt}
\end{figure}

We use joins to replicate type matching in the gradual typing literature.\footnote{Typically notated $\tau \blacktriangleright \tau_1 \funarr \tau_2$.} $\tau \equiv \tau_1 \funarr \tau_2$ extracts the domain and codomain of $\tau$, or the gap type $\gap$ if $\tau \equiv \gap$. Case expressions can synthesise a type when their branches synthesise \textit{consistent} types. There is a dedicated rule for unannotated let bindings, which differ from function applications in a bidirectional setting due to type information flowing from the binding into the body. The typing rules satisfy graduality (\S\ref{sec:gradual-types}).
\begin{theorem}\label{thm:graduality-syn}
The synthesis judgement $\synthesis[\_]{\_}{\_}$ and analysis judgement $\analysis[\_]{\_}{\_}$ satisfy downwards static graduality.
\end{theorem}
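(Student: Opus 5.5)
We prove both clauses of downwards static graduality simultaneously, by mutual induction on the derivation of $\synthesis{e}{\tau}$ (respectively $\analysis{e}{\tau}$), generalising over $\Gamma' \sqsubseteq \Gamma$, $e' \sqsubseteq e$ and, for analysis, $\tau' \sqsubseteq \tau$. In every case we first invert $e' \sqsubseteq e$. If $e' = \gap$, synthesis is immediate because $\gap$ synthesises $\gap \sqsubseteq \tau$. For analysis, $\gap$ analyses against any type by subsumption, via $\gap \sim \tau'$. Otherwise $e'$ has the same head constructor as $e$, and its immediate subterms (and any type annotations) are less precise. We then rebuild a derivation with the same rule, applying the induction hypotheses to the premises.

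We need a small toolkit of auxiliary lemmas about types.
\begin{itemize}
\item Consistency is downward closed: if $\tau_1 \sim \tau_2$, $\tau_1' \sqsubseteq \tau_1$ and $\tau_2' \sqsubseteq \tau_2$, then $\tau_1' \sim \tau_2'$.
\item Joins are monotone where defined: $\tau_1' \sqcup \tau_2' \sqsubseteq \tau_1 \sqcup \tau_2$. This covers the matching joins with $\gap \funarr \gap$, $\gap + \gap$ and $\forall\alpha.\;\gap$, and the branch join in \synr{case}.
\item Substitution is monotone: if $\sigma' \sqsubseteq \sigma$ and $\tau_1' \sqsubseteq \tau_1$, then $[\alpha \mapsto \sigma']\tau_1' \sqsubseteq [\alpha \mapsto \sigma]\tau_1$.
\item Well-formedness $\wf{\tau}$ is preserved under $\sqsubseteq$-decrease.
\item Assumption extension is monotone: $\Gamma', x \mathbin{:} \tau_1' \sqsubseteq \Gamma, x \mathbin{:} \tau_1$.
\end{itemize}
Each is a routine induction on the precision derivation.

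Representative synthesis cases go as follows. In \synr{Var}, $e' = x$, but $x$ may fall outside $\mathrm{dom}(\Gamma')$. The rules must therefore handle free variables, say by letting unbound variables synthesise $\gap$ (as in Hazel, or as a marked free variable). If not, the variable case forces a restriction of the theorem, and we take the former reading. Otherwise $\Gamma'(x) \sqsubseteq \Gamma(x)$. For \synr{$\lambda$Ann}, \synr{let} and \synr{case}, the binder types shrink, and we use the extension lemma before invoking the induction hypothesis on the body. For \synr{App}, the induction hypothesis gives $\tau^\circ \sqsubseteq \tau$ for the function. Monotonicity of the matching join gives $\tau_1^\circ \funarr \tau_2^\circ \sqsubseteq \tau_1 \funarr \tau_2$. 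The analysis induction hypothesis, applied at the smaller $\tau_1^\circ$, discharges the argument premise. The rule \synr{TApp} uses substitution monotonicity. For \synr{case}, the induction hypothesis yields branch types $\tau_i'' \sqsubseteq \tau_i'$, which are still consistent by downward closure. Their join is then below $\tau_1' \sqcup \tau_2'$.

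For analysis, the \textsc{Sub} case uses the synthesis induction hypothesis to get $\tau^\circ \sqsubseteq \tau$, and then downward closure of consistency for $\tau^\circ \sim \tau'$. This assumes subsumption is stated up to consistency in the gradual setting, and when $\tau'$ is less precise this still holds. The introduction-form analysis rules (unannotated $\lambda$, pairs, injections, $\Lambda$) analyse against the matched type $\tau' \sqcup (\gap \funarr \gap)$ and the like. Here we use join monotonicity with the fact that $\gap$ matches to $\gap \funarr \gap$, so a less precise expected type yields componentwise less precise expectations.

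I expect the main obstacle to be the mismatch between cases. A term analysed by a dedicated rule in the original derivation may need a different rule after slicing. For example, an annotated $\lambda$ whose annotation became $\gap$ must be checked against the corresponding component of $\tau'$, which in turn may have collapsed to $\gap$. The matching lemmas must therefore be stated carefully for the $\gap$ cases, as must the consistency premise between annotation and expected domain. A second delicate point is the bookkeeping of $\forall$-binders and $\alpha$-equivalence in the substitution lemma. In the Agda development this is presumably handled with de Bruijn indices, requiring shifting lemmas commuting with $\sqsubseteq$.
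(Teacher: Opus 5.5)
Your proposal is correct and takes essentially the same route as the paper's Agda-mechanised proof. That route is a mutual induction on the synthesis and analysis derivations: invert $e' \sqsubseteq e$, rebuild with the same rule, and use downward closure of consistency, monotonicity of the matching joins, monotonicity of substitution, and monotonicity of assumption extension.

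Your variable-case caveat is well placed. With domain-inclusion precision on assumptions, the statement forces an unbound variable to still synthesise a type such as $\gap$. The rule mismatch you fear never arises, because the sliced term keeps its head constructor and every premise only weakens under precision.
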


\section{Synthesis Slices}
\label{sec:synthesis}

\providecommand{\synslice}[2]{\mathit{SynSlice}\;#1 \mathbin{\blacktriangleleft} #2}
\providecommand{\minsynslice}[2]{\mathit{MinSynSlice}\;#1 \mathbin{\blacktriangleleft} #2}
\providecommand{\exactsynslice}[2]{\mathit{ExactSynSlice}\;#1 \mathbin{\blacktriangleleft} #2}
\providecommand{\isMin}[1]{\mathsf{IsMinimal}\,(#1)}
\providecommand{\joinsyn}{\sqcup}
\providecommand{\meetsyn}{\sqcap}
\providecommand{\pairsyn}{\times}

\noindent The type information of a subexpression in a bidirectional type system is derived either from \textit{within} the term by synthesis, or from the surrounding context. This section defines \textit{synthesis slices} which concern how to explain an expression's \textit{synthesised} type, that is, the type information \textit{internal} to the expression.

\subsection{Synthesis Slices}
\label{sec:syn-slice-def}

Consider a `program' $P = (\Gamma, e)$, that is, typing assumptions and an expression. When the program synthesises a type (in some type variable context $\Delta$)\footnote{Type variable contexts are not sliced as the only information such slices would encode is whether a type variable was used, which can be extracted purely syntactically from the resulting slice. Typing assumptions are different because we can \textit{partially} require an assumption.}, all of its type information is contained within the synthesis derivation $D \;:\; \synthesis{e}{\tau}$.

Given the derivation $D$, we may \textit{query} the type $\tau$ by taking a slice $\upsilon$ in $\lfloor \tau \rfloor$. A synthesis slice is a `program slice' $\rho = (\gamma, \varsigma)$ in $\lfloor \Gamma , e\rfloor$ which synthesises a type at least as precise as the query $\upsilon$; this corresponds to a highlighted version of the original program. 

Intuitively, $\upsilon$ specifies the part of $\tau$ we wish to explain; for example, querying $\upsilon = \mathit{Int} \funarr \gap$ for $\tau = \mathit{Int} \funarr \mathit{Bool}$ asks ``which parts of the program account for the domain being $\mathit{Int}$?''. The slice might not include \textit{all} the program regions relevant to the query type, but will necessarily provide a consistent subset which suffices to totally \textit{explain} (independently synthesise) the query.

\noindent
\begin{minipage}[c]{0.56\linewidth}
\begin{definition}[Synthesis slice]
\label{def:syn-slice}
A \textit{synthesis slice} of $D : \synthesis{e}{\tau}$ on $\upsilon \in \lfloor \tau \rfloor$, written $\synslice{D}{\upsilon}$, is a pair $(\gamma, \varsigma) \in \lfloor \Gamma, e\rfloor$ such that $\Delta;\gamma \vdash \varsigma\;\synmode\;\phi$ for some $\phi \sqsupseteq \upsilon$ (with $\phi \in \lfloor \tau \rfloor$ by graduality, \cref{thm:graduality-syn}).
\end{definition}
\end{minipage}\hfill
\begin{minipage}[c]{0.40\linewidth}
\centering
\vspace{6pt}
\begin{tikzcd}[column sep=large, row sep=normal]
& \rho \arrow[r, "\sqsubseteq" {sloped}] \arrow[d, "\synmode"'] & (\Gamma, e) \arrow[d, "\synmode"] \\
\upsilon \arrow[r, "\sqsubseteq" {sloped}] & \phi \arrow[r, induced, "\sqsubseteq" {sloped}, "\text{(by graduality)}"', dashed] & \tau
\end{tikzcd}
\end{minipage}

\medskip

Many such synthesis slices exist. In particular, the original program $(\Gamma, e)$ is necessarily a synthesis slice; and for $\synslice{D}{\gap}$, every slice of $(\Gamma, e)$ is a valid synthesis slice. We refer to the $\upsilon \sqsubseteq \phi$ condition as \textit{validity}. It is intentional that this is \textit{not} an exact equality, as demonstrated by the counterexample below:

\begin{counterexamplebox}

\begin{counterexample}[Non-Existence of all Exact Synthesis Slices]\label{lem:exact-not-exist}\leavevmode\\
Consider $x \mathbin{:} \mathbf{1} \funarr \mathbf{1} \vdash (x, x) \synmode (1 \funarr 1) \times (1 \funarr 1)$, and query $\upsilon = (\mathbf{1} \funarr \gap) \times (\gap \funarr \mathbf{1})$.

Any strictly smaller slice of either $x : 1 \funarr 1$ or $(x, x)$ synthesises a type strictly less precise than $\upsilon$. Yet the original program synthesises $(1 \funarr 1) \times (1 \funarr 1)$, strictly more precise than $\upsilon$.
\end{counterexample}
\end{counterexamplebox}

\subsection{Minimality}
\label{sec:minimality}

Synthesis slices provide \textit{some} explanation of the query, but many are clearly useless, such as the original program itself, which corresponds to fully highlighting the program. We want to find minimal slices: those where further slicing any part of the program makes the slice invalid (synthesising a type insufficient to cover the query).
\begin{definition}[Minimality]
\label{def:isminimal}
For an element $a$ of a poset, $a$ is minimal, $\isMin{a}$, when every $a'$ with $a' \sqsubseteq a$ is in fact equal to $a$.
\end{definition}
\begin{definition}[Minimal Synthesis Slices]
A $\minsynslice{D}{\upsilon}$ is a $\synslice{D}{\upsilon}$ minimal in $\synslice{D}{\upsilon}$ under program-slice precision on $\lfloor \Gamma, e\rfloor$.
\end{definition}

\subsection{Existence of Minimal Slices}
\label{sec:min-exists}

Crucially, any synthesis slice has a minimal slice below it (or is minimal):
\begin{majortheorembox}

\begin{theorem}[Existence of minimal slices]\label{thm:min-exists}
For every $s : \synslice{D}{\upsilon}$, there exists $m : \minsynslice{D}{\upsilon}$ with $m \sqsubseteq s$.
\end{theorem}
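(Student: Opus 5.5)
The plan is to use finiteness of the slice lattice (\cref{thm:slice-finite}) and run a well-founded descent inside the set of synthesis slices below $s$. Let $S = \{\rho \in \lfloor \Gamma, e\rfloor \mid \rho \sqsubseteq s \text{ and } \rho \text{ is a } \synslice{D}{\upsilon}\}$. The program-slice lattice $\lfloor \Gamma, e\rfloor$ is the product of $\lfloor\Gamma\rfloor$ and $\lfloor e\rfloor$, both finite by \cref{thm:slice-finite}, so $S$ is finite. It is also nonempty, since $s \in S$.

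Every nonempty finite poset has a minimal element. Constructively (as the Agda mechanisation would require), take a well-founded measure such as the size of $\rho$, meaning the number of non-$\gap$ nodes in the expression and types plus the size of the domain of $\gamma$. A strict precision step $\rho' \sqsubset \rho$ strictly decreases this measure, because precision only replaces subterms by $\gap$ or drops assumptions. Then iterate: if the current $\rho \in S$ is not minimal among synthesis slices, there is a synthesis slice $\rho' \sqsubseteq \rho$ with $\rho' \neq \rho$. By transitivity $\rho' \sqsubseteq s$, so $\rho' \in S$, and its measure is strictly smaller. The process must terminate at some $m \in S$.

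It remains to check that this $m$ satisfies $\minsynslice{D}{\upsilon}$, i.e.\ that it is minimal among \emph{all} synthesis slices and not just those in $S$. This is immediate: any synthesis slice $\rho' \sqsubseteq m$ satisfies $\rho' \sqsubseteq s$ by transitivity, hence lies in $S$, so downward minimality within $S$ coincides with minimality in $\synslice{D}{\upsilon}$. Graduality (\cref{thm:graduality-syn}) is not needed for existence itself; it only guarantees that the synthesised type $\phi$ lies in $\lfloor\tau\rfloor$.

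The main obstacle is the decidability of each descent step. In a classical argument this is trivial. Constructively, to decide whether $\rho$ is minimal we must enumerate the finitely many strict lower slices of $\rho$ and, for each, decide whether it synthesises some $\phi \sqsupseteq \upsilon$. This requires synthesis to be decidable and functional (bidirectional systems are algorithmic, so I would invoke determinism of synthesis) and precision on types to be decidable. Given these, I would first establish an explicit finite enumeration of $\lfloor \Gamma, e\rfloor$, since the finiteness theorem should provide one, and then proceed by strong induction on the size measure.
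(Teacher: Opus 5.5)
Your proposal is correct and follows essentially the same argument as the paper: finiteness of $\lfloor \Gamma, e\rfloor$ makes the set of synthesis slices strictly below $s$ finite and enumerable, and you descend through it until no strictly smaller synthesis slice remains. You differ only in presentation, not in approach: you guarantee termination with an explicit size measure where the paper appeals to finiteness directly, and you spell out the transitivity argument that minimality below $s$ is minimality among all synthesis slices, which the paper leaves implicit.
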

\end{majortheorembox}

\begin{proof}
The slice lattice $\lfloor \Gamma, e \rfloor$ is finite. Hence, $\{\,s' : \synslice{D}{\upsilon} \mid s' \sqsubset s\,\}$ is finite and computable by enumeration. If this set is empty, then $s$ is minimal; otherwise, recurse on any of its elements. Each step strictly decreases precision, so finiteness ensures termination.
\end{proof}

Consequently, any derivation $D$ has a minimal synthesis slice, as the original program is a trivial inhabitant of $\synslice{D}{\upsilon}$ for all $\upsilon$.
\paragraph{A Brute-Force Algorithm via Graduality} The proof above is not a practical algorithm. In practice, for any term, the list of maximal slices strictly less precise than a given term can be computed efficiently by omitting exactly \textit{one} leaf from the AST.

The algorithm picks the first maximal strict slice that satisfies the query and recurses; if none of the slices satisfy the query, then a minimal slice has been found, by graduality (\cref{thm:graduality-syn}). This is $\mathcal{O}(n \times T)$ where $n$ is the program size and $T$ the type checking complexity; type checking is linear in the program size for this calculus, so finding a minimal slice is $\mathcal{O}(n^2)$.

See Figure~\ref{fig:syn-slice-existence} for an example lattice to descend through, stopping just before descending below the given query; all three minima are marked.

\tikzset{
  slprog/.style={draw=latticecolor!55, fill=white, rounded corners=2pt,
                 inner sep=2.2pt, font=\scriptsize},
  slmin/.style={draw=OliveGreen!70!black, fill=white, line width=0.9pt,
                ellipse, minimum height=0.95cm, inner xsep=2pt, inner ysep=2pt, font=\scriptsize},
  slexact/.style={draw=OliveGreen!75!black, fill=white, line width=1.2pt,
                  rectangle, minimum width=0.85cm, minimum height=0.85cm, inner sep=2pt, font=\scriptsize},
  slhi/.style={draw=RoyalBlue!85!black, fill=white, line width=1.3pt, rounded corners=2pt,
               inner sep=2.2pt, font=\scriptsize},
  slprogarr/.style={->, latticecolor!75, line width=0.95pt},
  slsyn/.style={->, BrickRed, line width=0.6pt, dashed,
                dash pattern=on 3pt off 2pt, shorten >=2.5pt, shorten <=2.5pt},
  slpath/.style={RoyalBlue, opacity=0.5, line width=6.5pt,
                 line cap=round, line join=round},
  slfront/.style={OliveGreen!55!black, line width=1.5pt, line cap=round},
  sldot/.style={OliveGreen!55!black, line width=0.5pt, dotted},
}
\newcommand{\sliceNodes}{
  \node[slprog] (T)   at ( 0.0, 9.0) {$(P,\;\tau)$};
  \node[slprog] (A)   at (-2.7, 7.4) {$(\rho_a,\;\phi_a)$};
  \node[slprog] (B)   at ( 0.1, 7.3) {$(\rho_b,\;\phi_b)$};
  \node[slprog] (C)   at ( 2.7, 7.4) {$(\rho_c,\;\phi_c)$};
  \node[slprog] (E)   at (-1.9, 5.9) {$(\rho_e,\;\phi_e)$};
  \node[slprog] (F)   at ( 1.4, 6.0) {$(\rho_f,\;\phi_f)$};
  \node (G)   at ( 3.0, 5.0) {};
  \node[slprog] (H)   at (-3.0, 4.4) {$(\rho_h,\;\phi_h)$};
  \node[slprog] (I)   at (-0.5, 4.2) {$(\rho_i,\;\upsilon)$};
  \node[slprog] (J)   at ( 2.5, 3.7) {$(\rho_j,\;\phi_j)$};
  \node[slprog] (K)   at (-1.7, 2.5) {$(\rho_k,\;\phi_k)$};
  \node[slprog] (Bot) at ( 0.7, 1.1) {$(\gap,\;\gap)$};
}
\newcommand{\sliceEdges}{
  \foreach \p/\c in {T/A,T/B,T/C,A/E,A/H,B/E,B/F,C/F,E/I,F/I,F/J,H/K,I/K,I/Bot,J/Bot,K/Bot}
    \draw[slprogarr] (\p) -- (\c);
}
\newcommand{\sliceFrontier}{
  \draw[slfront] (-3.9,3.5) .. controls (-2.4,3.5) and (-1.1,4.7) .. (3.9,4.3);
}

\begin{figure}[t]
\centering
\begin{subfigure}[t]{0.49\linewidth}
\centering
\begin{tikzpicture}[scale=0.68, transform shape, >={Stealth[length=4pt,width=3pt]}]
\begin{scope}[on background layer]
  \fill[OliveGreen!14] (-3.9,9.8) -- (-3.9,3.5)
    .. controls (-2.4,3.5) and (-1.1,4.7) .. (3.9,4.3) -- (3.9,9.8) -- cycle;
  \sliceFrontier
  \draw[sldot] (-3.9,9.8) -- (3.9,9.8)  (-3.9,9.8) -- (-3.9,3.5)  (3.9,9.8) -- (3.9,4.3);
\end{scope}
\sliceNodes
\sliceEdges
\draw[slprogarr] (G) -- (J);
\draw[slprogarr] (C) -- (G);
\foreach \p/\c in {T/A,T/B,T/C,A/E,A/H,B/E,B/F,C/F,C/G,E/I,F/I,F/J,G/J,H/K,I/K,I/Bot,J/Bot,K/Bot}
  \draw[slsyn] (\p) to[bend right=7] (\c);
\draw[slsyn] (F) -- (G);
\draw[slsyn] (E) -- (H);
\node[slmin] at (G) {$(\rho_g,\;\phi_g)$};
\node[slmin] at (H) {$(\rho_h,\;\phi_h)$};
\node[slexact] at (I) {$(\rho_i,\;\upsilon)$};
\node[anchor=north west, font=\scriptsize\bfseries, OliveGreen!35!black] at (-3.85,9.6) {$\upsilon \sqsubseteq \phi_i$};
\node[anchor=south west, font=\scriptsize\bfseries, gray!50!black] at (-3.85,0.7) {$\upsilon \not\sqsubseteq \phi_i$};
\end{tikzpicture}
\caption{Synthesis slices above query $\upsilon$ in the shaded region, with 3 minima consisting of 2 inexact minima (ellipses) and one minimum (square) synthesising exactly $\upsilon$.}
\label{fig:syn-slice-existence}
\end{subfigure}\hfill
\begin{subfigure}[t]{0.49\linewidth}
\centering
\begin{tikzpicture}[scale=0.68, transform shape, >={Stealth[length=4pt,width=3pt]}]
\begin{scope}[on background layer]
  \fill[OliveGreen!8]  (-3.9,9.8) -- (-3.9,2.2) .. controls (-1.2,1.8) and (1.2,1.8) .. (3.9,2.2) -- (3.9,9.8) -- cycle;
  \fill[OliveGreen!18] (-3.9,9.8) -- (-3.9,3.5) .. controls (-2.4,3.5) and (-1.1,4.7) .. (3.9,4.3) -- (3.9,9.8) -- cycle;
  \fill[OliveGreen!32] (-3.9,9.8) -- (-3.9,6.6) .. controls (-1.5,6.9) and (1.5,7.0) .. (3.9,6.7) -- (3.9,9.8) -- cycle;
  \draw[slfront] (-3.9,6.6) .. controls (-1.5,6.9) and (1.5,7.0) .. (3.9,6.7);
  \sliceFrontier
  \draw[slfront] (-3.9,2.2) .. controls (-1.2,1.8) and (1.2,1.8) .. (3.9,2.2);
  \draw[sldot] (-3.9,9.8) -- (3.9,9.8)  (-3.9,9.8) -- (-3.9,2.2)  (3.9,9.8) -- (3.9,2.2);
\end{scope}
\sliceNodes
\sliceEdges
\foreach \p/\c in {T/A,T/B,T/C,A/E,A/H,B/E,B/F,C/F,E/I,F/I,F/J,H/K,I/K,I/Bot,J/Bot,K/Bot}
  \draw[slsyn] (\p) to[bend right=7] (\c);
\draw[slsyn] (E) -- (H);
\begin{scope}[on background layer]
  \draw[slpath] (T) -- (B) -- (E) -- (I) -- (K) -- (Bot);
\end{scope}
\node[slhi] at (B) {$(\rho_b,\;\phi_b)$};
\node[slhi] at (I) {$(\rho_i,\;\upsilon_1)$};
\node[slhi] at (K) {$(\rho_k,\;\phi_k)$};
\node[anchor=west, font=\normalsize\bfseries, RoyalBlue!80!black] at ([xshift=3pt]B.east) {$m_2$};
\node[anchor=east, font=\normalsize\bfseries, RoyalBlue!80!black] at ([xshift=-3pt]I.west) {$m_1$};
\node[anchor=east, font=\normalsize\bfseries, RoyalBlue!80!black] at ([xshift=-3pt]K.west) {$m_0$};
\node[anchor=west, font=\normalsize, OliveGreen!40!black] at (3.95,6.7) {$\upsilon_2$};
\node[anchor=west, font=\normalsize, OliveGreen!40!black] at (3.95,4.3) {$\upsilon_1$};
\node[anchor=west, font=\normalsize, OliveGreen!40!black] at (3.95,2.2) {$\upsilon_0$};
\end{tikzpicture}
\caption{Monotonically descending minima $m_2, m_1, m_0$ for query refinements $\upsilon_2 \to \upsilon_1 \to \upsilon_0$, calculated by descending along the \textcolor{RoyalBlue!85!black}{highlighted} path.}
\label{fig:syn-slice-refinement}
\end{subfigure}
\caption{Two similar Hasse diagrams of program slices $\rho_i \in \lfloor P \rfloor$, ordered by {\color{latticecolor}strict program precision $\sqsubset$} (solid arrows), paired with corresponding synthesised types $\phi_i \in \lfloor \tau \rfloor$. Overlaid with precision relations on types ({\color{BrickRed}dashed arrows}).}
\label{fig:syn-slices}
\end{figure}

\subsection{Refined Slices}
\label{sec:sub-slices}
In practice, the user may look at a large query $\upsilon$ getting a large program slice $\rho$. Therefore they may want to refine the query to focus on a subpart, $\upsilon' \sqsubseteq \upsilon$, that they do not quite understand. It is always possible to pick a consistent program slice $\rho' \sqsubseteq \rho$ which is a refinement of the previous slice by \textit{monotonicity} (which follows from graduality, \cref{thm:graduality-syn}):

\begin{majortheorembox}
\begin{theorem}[Monotonicity of minimal slices]\label{thm:mono}
If $\upsilon_1 \sqsubseteq \upsilon_2$ in $\lfloor\tau\rfloor$ and $m_2 : \minsynslice{D}{\upsilon_2}$, then there exists $m_1 : \minsynslice{D}{\upsilon_1}$ with $m_1 \sqsubseteq m_2$.
\end{theorem}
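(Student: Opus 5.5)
The plan is to show first that $m_2$ is itself a synthesis slice for the weaker query $\upsilon_1$, and then to descend from it to a minimal one using \cref{thm:min-exists}.

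\textbf{Step 1 (upward closure in the query).} By definition, $m_2 : \synslice{D}{\upsilon_2}$ means $m_2 = (\gamma, \varsigma) \in \lfloor \Gamma, e \rfloor$ and $\Delta;\gamma \vdash \varsigma\;\synmode\;\phi$ for some $\phi \sqsupseteq \upsilon_2$. Since $\upsilon_1 \sqsubseteq \upsilon_2$, transitivity of type precision gives $\upsilon_1 \sqsubseteq \phi$. The pair $(\gamma,\varsigma)$ is unchanged, so it still lies in $\lfloor \Gamma, e\rfloor$. Hence $m_2 : \synslice{D}{\upsilon_1}$. The only ingredient is that precision on $\lfloor\tau\rfloor$ is a partial order. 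This restates the observation that the set of synthesis slices is antitone in the query: weakening the query only enlarges the shaded region of \cref{fig:syn-slice-refinement}.

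\textbf{Step 2 (descend).} Apply \cref{thm:min-exists} to $s := m_2$, now viewed as an inhabitant of $\synslice{D}{\upsilon_1}$. This yields $m_1 : \minsynslice{D}{\upsilon_1}$ with $m_1 \sqsubseteq m_2$, which is exactly the claim. If an algorithmic reading is wanted, the brute-force descent of \S\ref{sec:min-exists}, started at $m_2$, computes such an $m_1$.

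\textbf{Where care is needed.} The main point to check is that nothing in the definition of $\synslice{D}{\upsilon}$ depends on the query beyond the validity condition $\upsilon \sqsubseteq \phi$. In particular, the synthesised type $\phi$ is determined by the slice alone, not by the query. This is why graduality (\cref{thm:graduality-syn}) enters only implicitly: it guarantees that $\phi \in \lfloor\tau\rfloor$, so validity is stated within a single lattice. Note also that $m_1$ need not equal $m_2$. Minimality for $\upsilon_2$ does not imply minimality for $\upsilon_1$, because there are strictly smaller slices of $m_2$ that are valid for $\upsilon_1$ but not for $\upsilon_2$. This is the reason we invoke existence of minima rather than reusing $m_2$ directly. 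Finiteness of $\lfloor \Gamma, e\rfloor$ (\cref{thm:slice-finite}) is what makes that descent terminate.
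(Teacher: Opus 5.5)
Your proof is correct and matches the argument the paper intends. Since $\upsilon_1 \sqsubseteq \upsilon_2 \sqsubseteq \phi$, the slice $m_2$ is already a synthesis slice for $\upsilon_1$, and descending from it via existence of minimal slices (which rests on finiteness of $\lfloor \Gamma, e\rfloor$) gives $m_1 \sqsubseteq m_2$; the paper attributes this to finiteness and graduality, and, as you observe, graduality is needed only to place $\phi$ in $\lfloor\tau\rfloor$ and to justify the efficient one-leaf-at-a-time descent.
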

\end{majortheorembox}

This is desirable because the user will not be shown \textit{additional} information after a query refinement which would require further reasoning by the user to understand. \Cref{fig:syn-slice-refinement} illustrates a monotonic refinement path along 3 refined queries on the example lattice. If we did not follow a monotonic path during refinement we would get `branch hopping':

\paragraph{Branch hopping.} This is where highlighted regions switch between branches when refining the query, requiring the user to switch their reasoning between the contents in each branch. For example, refining $\upsilon = \mathbf{1}\funarr\mathbf{1}$ to $\upsilon' = \gap\funarr\mathbf{1}$:
\begin{center}
\begin{tikzpicture}[baseline=(t.base), every node/.style={inner sep=3pt}]
  \node[draw, line width=0.9pt, rounded corners,
        label={[font=\scriptsize\itshape]above:assumptions}] (a) {$x : \sli{\mathbf{1} \funarr \mathbf{1}}$};
  \node[right=6pt of a] (t) {$\sli{\mathbf{if}}\;\mathbf{true}\;\sli{\mathbf{then}\;x\;\mathbf{else}}\;y$};
\end{tikzpicture}
\quad$\xrightarrow{\upsilon' \sqsubseteq \upsilon}$\quad
\begin{tikzpicture}[baseline=(t.base), every node/.style={inner sep=3pt}]
  \node[draw, line width=0.9pt, rounded corners,
        label={[font=\scriptsize\itshape]above:assumptions}] (a) {$y : \mathbf{1}\,\sli{\funarr \mathbf{1}}$};
  \node[right=6pt of a] (t) {$\sli{\mathbf{if}}\;\mathbf{true}\;\sli{\mathbf{then}}\;x\;\sli{\mathbf{else}\;y}$};
\end{tikzpicture}
\end{center}

On the other hand, monotonicity does not necessarily hold in the upwards direction. Consider the $\rho_j$ node in \Cref{fig:syn-slice-refinement}, which is minimal under query $\upsilon_0$, but no minimum for query $\upsilon_1$ is strictly more precise than $\rho_j$.

\subsection{Decompositions}
\label{sec:decompositions}
\mastercases

Instead of calculating slices by brute force (\S\ref{sec:min-exists}), it would be ideal if we could compositionally calculate minimal slices of a term from minimal subslices of its subterms. It is indeed possible to compose synthesis slices into larger synthesis slices, and we prove that all minima can be expressed by some composition; this subsection demonstrates two of the cases. We later consider how to recursively choose the compositions (\S\ref{sec:algorithms}).

\subsubsection{Products}
\label{sec:case-pair}

Given slices $(\gamma_1, \varsigma_1) : \synslice{D_1}{\upsilon_1}$ and $(\gamma_2, \varsigma_2) : \synslice{D_2}{\upsilon_2}$ for the two components of a pair $D = \synr{Pair}\,D_1\,D_2$ (with synthesised types $\phi_1, \phi_2$), we form their product slice $(\gamma_1, \varsigma_1) \pairsyn (\gamma_2, \varsigma_2) : \synslice{D}{\upsilon_1 \times \upsilon_2}$ by joining the two assumption slices and pairing the term slices:
\[(\gamma_1, \varsigma_1) \pairsyn (\gamma_2, \varsigma_2) \;\triangleq\; (\gamma_1 \sqcup \gamma_2,\;(\varsigma_1, \varsigma_2))\]
The pair synthesises (possibly more precise)\footnote{Due to $\gamma_1 \sqcup \gamma_2$ being larger than $\gamma_1$ and $\gamma_2$.} types $\phi_1' \times \phi_2' \sqsupseteq \phi_1 \times \phi_2 \sqsupseteq \upsilon_1 \times \upsilon_2$. For this combinator, we get the following decomposition theorem:

\begin{majortheorembox}
\begin{thmcase}[Product Decomposition]\label{thm:min-prod-decomp}
Let $m : \minsynslice{\synr{Pair}\,D_1\,D_2}{\upsilon_1 \times \upsilon_2}$. Then there exist minimal $m_1 : \minsynslice{D_1}{\upsilon_1}$ and $m_2 : \minsynslice{D_2}{\upsilon_2}$ with $m = m_1 \pairsyn m_2$.
\end{thmcase}
\end{majortheorembox}

However, we shall later see that the converse fails: the product of two minimal slices need not be minimal, since the joined assumptions can leave one component's term slice redundant. So one cannot simply recurse through the derivation, pairing sub-slices on $\upsilon_1$ and $\upsilon_2$ to obtain a minimal slice on $\upsilon_1 \times \upsilon_2$. We explore and resolve this problem later (\S\ref{sec:product-revisited}).

\subsubsection{Let Bindings}
\label{sec:let-decomp}

A more subtle case is that of let-bindings, which have different assumptions for their two premises:
\[\inference[\synr{Let}]{
  \synthesis{e_1}{\tau_1}
  &
  \synthesis[\Delta;\,\Gamma, x:\tau_1]{e_2}{\tau_2}
}{\synthesis{\mathbf{let}\;x = e_1\;\mathbf{in}\;e_2}{\tau_2}}\]
A slice of a let-expression can be constructed from a slice $(\gamma_1, \varsigma_1)$ of the definition (synthesising $\phi_1$) and a slice $(\gamma_2, \varsigma_2)$ of the body, constrained with an assumption consistency condition: the body's used assumptions for the bound variable $x$ must be \textit{at most as precise as} the type synthesised by the slice of the definition, $\gamma_2(x) \mathbin{\sqsubseteq} \phi_1$.

Then, upon joining the assumptions for the two slices, we must remove the assumption on $x$ from the body (notated $\gamma_{\setminus x}$), as this is now provided by the definition:
\[(\gamma_1, \varsigma_1) \mathbin{\mathsf{def}} (\gamma_2, \varsigma_2) \triangleq (\gamma_1 \sqcup (\gamma_2)_{\setminus x},\; \mathbf{let}\;x = \varsigma_1\;\mathbf{in}\;\varsigma_2)\]

\begin{majortheorembox}
\begin{thmcase}[Binding Decomposition]\label{thm:min-def-decomp}
Let $m : \minsynslice{\synr{Let}\,D_1\,D_2}{\upsilon}$. Then there exists $\phi_1 \in \lfloor\tau_1\rfloor$ and minimal slices $m_1 : \minsynslice{D_1}{\phi_1}$ (synthesising $\phi_1'$) and $m_2 = (\gamma_2, \varsigma_2) : \minsynslice{D_2}{\upsilon}$ satisfying $\gamma_2(x) \sqsubseteq \phi_1'$, with $m = m_1 \mathbin{\mathsf{def}} m_2$.
\end{thmcase}
\end{majortheorembox}

\subsection{Contribution Slices}
\label{sec:join-slices}

Finally, a minimal slice only provides an explanation; it does not show \textit{all} information \textit{contributing} in any way to a query type. The latter is a useful notion when performing refactors which change the type, where quickly identifying the regions of the program which rely on the old type is crucial.

Such a region can be uniquely expressed as the \textit{join} of all minimal slices for the query. The contribution slice is efficient in the sense that it is a \textit{least} upper bound of actual minimal slices. For this to work, synthesis slices must be closed under joins: the joined slice must also be \textit{valid}:

\vspace{-0.5em}
\begin{majortheorembox}
\noindent
\begin{minipage}[c]{0.60\linewidth}
\begin{theorem}[Join Closure of Synthesis Slices]\label{thm:join-syn}
For $\rho_1 = (\gamma_1, \varsigma_1) : \synslice{D}{\upsilon_1}$ and $\rho_2 = (\gamma_2, \varsigma_2) : \synslice{D}{\upsilon_2}$, then $\rho_1 \joinsyn \rho_2 : \synslice{D}{\upsilon_1 \sqcup \upsilon_2}$, with the joined slice taken pointwise: $(\gamma_1 \sqcup \gamma_2,\; \varsigma_1 \sqcup \varsigma_2)$.
\end{theorem}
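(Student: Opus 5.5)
The plan is to get validity of the joined slice from downwards static graduality (\cref{thm:graduality-syn}) applied twice, once from the original program down to the join and once from the join down to each component slice. The join is well defined first of all. Both $\rho_1$ and $\rho_2$ lie in $\lfloor \Gamma, e\rfloor$, so by \cref{lem:slices-consistent} (lifted componentwise to assumptions and expressions) they are consistent, and $\rho_1 \sqcup \rho_2 = (\gamma_1 \sqcup \gamma_2, \varsigma_1 \sqcup \varsigma_2)$ exists in the slice lattice (\cref{thm:slice-lattices}). Since it is an upper bound in that lattice, $\rho_i \sqsubseteq \rho_1 \sqcup \rho_2 \sqsubseteq (\Gamma, e)$ for $i = 1,2$. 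Likewise $\upsilon_1, \upsilon_2 \in \lfloor\tau\rfloor$, so $\upsilon_1 \sqcup \upsilon_2$ is defined and lies in $\lfloor\tau\rfloor$.

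Next I show that the joined slice synthesises something. Apply graduality to $D : \synthesis{e}{\tau}$ with $\gamma_1 \sqcup \gamma_2 \sqsubseteq \Gamma$ and $\varsigma_1 \sqcup \varsigma_2 \sqsubseteq e$. This gives some $\phi \sqsubseteq \tau$ with $\Delta;\gamma_1\sqcup\gamma_2 \vdash \varsigma_1\sqcup\varsigma_2 \;\synmode\; \phi$.

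It remains to show $\upsilon_1 \sqcup \upsilon_2 \sqsubseteq \phi$, and by the least-upper-bound property it suffices to show $\upsilon_i \sqsubseteq \phi$ for each $i$. Apply graduality again, now to the derivation for the joined slice, using $\gamma_i \sqsubseteq \gamma_1 \sqcup \gamma_2$ and $\varsigma_i \sqsubseteq \varsigma_1 \sqcup \varsigma_2$. This yields some $\phi_i' \sqsubseteq \phi$ with $\Delta;\gamma_i \vdash \varsigma_i \;\synmode\; \phi_i'$. By hypothesis $\rho_i$ also synthesises some $\phi_i \sqsupseteq \upsilon_i$. If synthesis is deterministic, then $\phi_i' = \phi_i$, so $\upsilon_i \sqsubseteq \phi_i = \phi_i' \sqsubseteq \phi$ as required.

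The main obstacle is therefore uniqueness of synthesised types: if $\Delta;\Gamma \vdash e \;\synmode\; \tau$ and $\Delta;\Gamma \vdash e \;\synmode\; \tau'$, then $\tau = \tau'$. The paper has not stated this, so I would prove it by induction on the first derivation, inverting the second. The synthesis rules are syntax-directed, with exactly one synthesis rule per expression form (\synr{Var}, \synr{$\lambda$Ann}, \synr{App}, \synr{Let}, \synr{case}, \ldots). Their auxiliary operations are functions: matching via $\tau \sqcup (\gap\funarr\gap)$, the join of branch types, and substitution in \synr{TApp}. So each case follows by the induction hypotheses together with functionality of these operations. The one place I expect care is the analysis premises, for example in \synr{App}. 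These premises contribute no output type, so they never affect the synthesised type, and only the synthesis premises need to be compared. Should some rule turn out not to be deterministic, the fallback is to state the definition of synthesis slice with respect to \emph{every} synthesised type, or to replace determinism by the weaker claim that any two synthesised types of the same program coincide up to precision. Determinism is the expected situation for this algorithmic bidirectional calculus, however.
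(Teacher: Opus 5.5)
Your proof is correct and follows the route the paper's diagram for this theorem indicates: graduality (\cref{thm:graduality-syn}) applied along $\rho_i \sqsubseteq \rho_1 \sqcup \rho_2$ places each $\phi_i$ below the joined slice's synthesised type $\phi'$, and the least-upper-bound property then gives $\upsilon_1 \sqcup \upsilon_2 \sqsubseteq \phi'$. The uniqueness of synthesised types you isolate is what the paper tacitly relies on (it describes the core calculus as assigning unique types), and it holds here because the synthesis rules are syntax-directed, so your fallback is not needed.
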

\end{minipage}\hfill
\begin{minipage}[c]{0.36\linewidth}
\centering
\begin{tikzcd}[column sep=1.6em, row sep=1.2em]
\rho_1 \arrow[r, "\sqsubseteq"] \arrow[d, "\synmode"'] & {\color{induced} \rho_1 \sqcup \rho_2} \arrow[d, "\synmode"] & \rho_2 \arrow[l, "\sqsupseteq"'] \arrow[d, "\synmode"] \\
\phi_1 & {\color{induced} \phi'} & \phi_2 \\
\upsilon_1 \arrow[u, "\sqsubseteq" {sloped}] \arrow[r, "\sqsubseteq"] & {\color{induced} \upsilon_1 \sqcup \upsilon_2} \arrow[u, induced, "\sqsubseteq"' {sloped}, dashed] & \upsilon_2 \arrow[u, "\sqsubseteq"' {sloped}] \arrow[l, "\sqsupseteq"']
\end{tikzcd}
\end{minipage}
\end{majortheorembox}

\vspace{-12pt}
\section{Context Typing}
\label{sec:context-typing}

Context typing is a novel construction for bidirectional type systems. It is inspired by one-hole contexts (zippers) of algebraic data types~\cite{Huet1997}, decomposing a data structure into a focused element and its surrounding context. We apply the same idea to \textit{typing derivations}: given a derivation of an expression $e$ and focusing on a sub-term $e'$ in $e$, we focus on the subderivation of $e'$ and pair it with a `context typing derivation' of the surrounding context.

A context typing derivation therefore must record the extra assumptions made available at the focus by additional bound variables introduced in the captured context, and the mode of the focus (whether or not the sub-term synthesises a type). These are necessary for type checking the focus (i.e.\ $e'$) directly from the typing derivation of the context.

\subsection{Context Syntax and Slice Lattice}
\label{sec:context-syntax}

Formally we must first define one-hole contexts on terms themselves (Figure~\ref{fig:contexts}). The focus of the context is notated $\cmark$ and is restricted to a single position in an expression. For any context, a term may be placed at the focus to produce a complete term. This is notated by $\C[e]$, `plugging $e$ into context $\C$'.

\begin{figure}[h]
\begin{minipage}[t]{0.54\linewidth}
\fbox{Syntax}\ \ \ One-hole expression contexts $\C$
\begin{align*}
\C &::= \cmark \mid \lambda x \mathbin{:} \tau.\; \C \mid \lambda x.\; \C \mid \C(e) \mid e(\C) \mid \C\langle\tau\rangle \\
  &\quad\mid\; (\C,\; e) \mid (e,\; \C) \mid \iota_1\; \C \mid \iota_2\; \C \mid \pi_1\; \C \mid \pi_2\; \C \\
  &\quad\mid\; (\mathbf{case}\;\C\;\mathbf{of}\;x.\;e \mid y.\;e) \\
  &\quad\mid\; (\mathbf{case}\;e\;\mathbf{of}\;x.\;\C \mid y.\;e) \mid (\mathbf{case}\;e\;\mathbf{of}\;x.\;e \mid y.\;\C) \\
  &\quad\mid\; \Lambda\alpha.\; \C \mid \mathbf{let}\;x = \C\;\mathbf{in}\;e \mid \mathbf{let}\;x = e\;\mathbf{in}\;\C
\end{align*}
\end{minipage}\hfill
\begin{minipage}[t]{0.43\linewidth}
\fbox{$\C[e]$}\ \ \ Plug expression $e$ into the hole $\cmark$
\vspace{-0.5em}
\begin{align*}
\cmark[e] &= e \\
(\lambda x \mathbin{:} \tau.\; \C)[e] &= \lambda x \mathbin{:} \tau.\; \C[e] \\
(\C_1(e_2))[e] &= \C_1[e](e_2)
\end{align*}
\smallskip\noindent(similarly for all other constructors)
\end{minipage}

\medskip
\fbox{$\gap_{\C}$}\ \ \ Purely structural context: replace all sub-terms with $\gap$, preserving structure
\vspace{-0.5em}
\begin{align*}
\gap_{\cmark} &= \cmark &
\gap_{\lambda x \mathbin{:} \tau.\; \C} &= \lambda x \mathbin{:} \gap.\; \gap_{\C} &
\gap_{\C(e)} &= \gap_{\C}(\gap) \quad \text{etc.}
\end{align*}
\caption{One-hole expression contexts. $\cmark$ marks the hole; $\C[e]$ plugs it.}
\label{fig:contexts}
\end{figure}

\label{sec:context-precision}
Precision extends to contexts, $\C' \sqsubseteq \C$, additionally requiring the focus to be in the same structural position. This is required so that plugging a context preserves precision:

\begin{lemma}[Plug Preserves Precision]\label{lem:plug-precision}
If\/ $\C' \sqsubseteq \C$ and $e' \sqsubseteq e$, then $\C'[e'] \sqsubseteq \C[e]$.
\end{lemma}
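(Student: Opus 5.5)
We prove the lemma by structural induction on the derivation of $\C' \sqsubseteq \C$. Context precision is defined congruently, mirroring expression precision: the focus $\cmark$ is related only to $\cmark$, and each context former relates to the same former. Examples are $\lambda x \mathbin{:} \tau'.\;\C_1' \sqsubseteq \lambda x \mathbin{:} \tau.\;\C_1$ when $\tau' \sqsubseteq \tau$ and $\C_1' \sqsubseteq \C_1$, and $\C_1'(e_2') \sqsubseteq \C_1(e_2)$ when $\C_1' \sqsubseteq \C_1$ and $e_2' \sqsubseteq e_2$. The requirement that the focus sit in the same structural position is what makes this induction go through. A context is never related to $\gap$ as a whole, since a context must contain its focus. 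The gap rule therefore only appears on the non-focus expression and type arguments.

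For the base case, $\C' = \C = \cmark$. Then $\C'[e'] = e' \sqsubseteq e = \C[e]$ holds directly by hypothesis.

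For each inductive case, unfold plugging one step using its defining equations. For instance, $(\C_1'(e_2'))[e'] = \C_1'[e'](e_2')$ and $(\C_1(e_2))[e] = \C_1[e](e_2)$. The induction hypothesis gives $\C_1'[e'] \sqsubseteq \C_1[e]$. The side premises of the context precision rule give precision on the remaining arguments, here $e_2' \sqsubseteq e_2$ or the type annotations. Applying the corresponding congruence rule of expression precision then closes the case. The binder cases ($\lambda$, $\Lambda$, $\mathbf{case}$ branches, $\mathbf{let}$ body) are handled identically. Plugging is not capture-avoiding: the focus deliberately sees the bound variables, and precision is syntactic, so no renaming issues arise.

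The only real obstacle is bookkeeping. For each of the roughly twenty context formers, we must check that the context precision rule carries exactly the premises that the matching expression precision rule needs. In particular, the $\mathbf{case}$ forms with the focus in the scrutinee or in either branch need precision on the two other subterms. In Agda this amounts to a routine case split with one congruence application per case.
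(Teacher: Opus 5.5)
Your proposal is correct and is essentially the argument the paper relies on. The paper gives no written proof; it defines context precision structurally, with the focus required to sit in the same position, precisely so that plugging preserves precision, and it leaves the lemma to its Agda mechanisation. Your induction on context precision, with the identity-context base case and one-step unfolding of plugging followed by the matching expression-precision congruence rule, is the natural proof of that mechanised statement.
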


This lemma allows slicing to independently slice the \textit{external} type information from the context and the \textit{internal} type information of the focused term, whilst ensuring the whole program slice (plugging the two together) remains a valid slice of the original program.

\subsection{The Context Typing Judgement}
\label{sec:context-classification}

Then, we notate context typing by a judgement $\ctxclass{\C}{d}{\Delta';\ \Gamma'}{m}$, read as: ``under assumptions $\Delta, \Gamma$, context $\C$ in a derivation of outer mode $d$ has its focus typed under $\Delta';\,\Gamma'$ in focus mode $m$''.

The \textit{outer derivation mode} $d$ records whether the whole expression $\C[e]$ is in a synthesis derivation producing type $\tau$ ($d = \synmode\;\tau$) or an analysis derivation against $\tau$ ($d = \anamode\;\tau$). The \textit{focus mode} $m$ records the type the focus $e$ must synthesise ($m = \synmode\;\tau'$), or the type it is checked against ($m = \anamode\;\tau'$). Finally, $\Gamma'$ records the \textit{focus assumptions}. Each rule in these derivations is subject to \textit{additional premises} matching the corresponding term typing rules.

\paragraph{Example.} Consider the derivation of $\synthesis{e_1(e_2)}{\tau_2}$ via \synr{App}, where $\synthesis{e_1}{\tau}$, $\tau \sqcup (\gap \funarr \gap) \equiv \tau_1 \funarr \tau_2$, and $\analysis{e_2}{\tau_1}$.
\begin{itemize}
\item \textbf{Function position} ($e_1$ is the focus, $\C = \cmark(e_2)$): The corresponding context typing is: $\ctxclass{\cmark(e_2)}{\synmode\;\tau_2}{\Delta;\,\Gamma}{\synmode\;\tau}$. The focus must synthesise type $\tau$ as per the focus mode, with the extra premises being $\tau \sqcup (\gap \funarr \gap) \equiv \tau_1 \funarr \tau_2$ and $\analysis{e_2}{\tau_1}$.
\item \textbf{Argument position} ($e_2$ is the focus, $\C = e_1(\cmark)$): The corresponding context typing is: $\ctxclass{e_1(\cmark)}{\synmode\;\tau_2}{\Delta;\,\Gamma}{\anamode\;\tau_1}$. The focus analyses against $\tau_1$ as per the focus mode, with the extra premises being: $\synthesis{e_1}{\tau}$ and $\tau \sqcup (\gap \funarr \gap) \equiv \tau_1 \funarr \tau_2$.
\end{itemize}

Specifying these rules is systematic, one for each typing rule and focus location. We give a few representative rules in Figure~\ref{fig:context-classification}. The only additional rules are for identity contexts $\texttt{s}\cmark$ and $\texttt{a}\cmark$ (the base cases) where the focus mode must match the outer derivation mode.

\begin{figure}[h]
\small
\fbox{$\ctxclass{\C}{d}{\Delta';\,\Gamma'}{m}$}\ \ \ Context $\C$ at outer derivation mode $d$ types its focus under $\Delta';\,\Gamma'$ in focus mode $m$

\bigskip
\fbox{Base cases}\ \ \ identity contexts (focus mode matches outer derivation mode):
\[\inference[\texttt{s$\cmark$}]{}{\ctxclass{\cmark}{\synmode\;\tau}{\Delta;\,\Gamma}{\synmode\;\tau}} \qquad
  \inference[\texttt{a$\cmark$}]{}{\ctxclass{\cmark}{\anamode\;\tau}{\Delta;\,\Gamma}{\anamode\;\tau}}\]

\medskip
\fbox{Synthesis mode: $d = \synmode\;\tau$}\ \ \ the overall expression $\C[e]$ synthesises:
\[\inference[\texttt{s$\circ_1$}]{\ctxclass{\C}{\synmode\;\tau}{\Delta';\,\Gamma'}{m} & \tau \sqcup (\gap \funarr \gap) \equiv \tau_1 \funarr \tau_2 & \analysis{e_2}{\tau_1}}{\ctxclass{\C(e_2)}{\synmode\;\tau_2}{\Delta';\,\Gamma'}{m}}\]
\[\inference[\texttt{s$\circ_2$}]{\synthesis{e_1}{\tau} & \tau \sqcup (\gap \funarr \gap) \equiv \tau_1 \funarr \tau_2 & \ctxclass{\C}{\anamode\;\tau_1}{\Delta';\,\Gamma'}{m}}{\ctxclass{e_1(\C)}{\synmode\;\tau_2}{\Delta';\,\Gamma'}{m}}\]

\medskip
\fbox{Subsumption: $\synmode\;\tau' \;\leadsto\; \anamode\;\tau$}\ \ \ the unique bridge from synthesis to analysis derivation mode:
\[\inference[\texttt{aSub}]{\ctxclass{\C}{\synmode\;\tau'}{\Delta';\,\Gamma'}{m} & \tau \sim \tau'}{\ctxclass{\C}{\anamode\;\tau}{\Delta';\,\Gamma'}{m}}\]
\caption{Selected context typing rules.}
\label{fig:context-classification}
\end{figure}

Correctness of context typing is formalised via a \textit{totality} theorem: every well-typed expression can be decomposed into context and expression at some focus, with a context typing matching a (consistent) typing derivation of the focused term. Conversely, a \textit{soundness} theorem states that a valid context typing and expression satisfying the focus typing mode can be composed into a well-typed plugged expression.

\begin{majortheorembox}
\begin{theorem}[Plug Decomposition -- Totality]\label{thm:plug-decomposition}
For any context $\C$, expression $e$, and outer derivation mode $d$:
\begin{itemize}
\item If\/ $\synthesis{\C[e]}{\tau}$, then there exist $\Delta'$, $\Gamma'$, and $m$ such that\\ $\ctxclass{\C}{\synmode\;\tau}{\Delta';\,\Gamma'}{m}$ and $e$ satisfies focus mode $m$ under $\Delta';\,\Gamma'$.
\item If\/ $\analysis{\C[e]}{\tau}$, then there exist $\Delta'$, $\Gamma'$, and $m$ such that\\ $\ctxclass{\C}{\anamode\;\tau}{\Delta';\,\Gamma'}{m}$ and $e$ satisfies focus mode $m$ under $\Delta';\,\Gamma'$.
\end{itemize}
Here ``$e$ satisfies focus mode $m$ under $\Delta';\,\Gamma'$'' means: if $m = \synmode\;\tau'$ then $\Delta';\,\Gamma' \vdash e\;\synmode\;\tau'$; if $m = \anamode\;\tau'$ then $\Delta';\,\Gamma' \vdash e\;\anamode\;\tau'$.
\end{theorem}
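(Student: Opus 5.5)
We prove both clauses simultaneously by structural induction on the context $\C$, generalising over the outer assumptions $\Delta;\Gamma$ and the type $\tau$. The induction is on $\C$ rather than on the typing derivation of $\C[e]$. The reason is that the shape of $\C$ fixes which sub-term of $\C[e]$ contains the focus, so inversion on the typing derivation of $\C[e]$ picks out a premise whose subject is $\C'[e]$ for the immediate sub-context $\C'$. The induction hypothesis then applies to $\C'$.

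\textbf{Base case} $\C = \cmark$. Here $\C[e] = e$. If $\synthesis{e}{\tau}$, take rule \texttt{s$\cmark$} with $\Delta' = \Delta$, $\Gamma' = \Gamma$ and $m = \synmode\;\tau$; dually, take \texttt{a$\cmark$} for analysis. The focus requirement is then exactly the hypothesis.

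\textbf{Synthesis clause, inductive cases.} Suppose $\synthesis{\C[e]}{\tau}$ with $\C$ a compound context. We invert the last synthesis rule applied. For instance, take $\C = \C_1(e_2)$ under \synr{App}. Inversion gives $\synthesis{\C_1[e]}{\tau'}$, the matching condition $\tau' \sqcup (\gap \funarr \gap) \equiv \tau_1 \funarr \tau_2$, and $\analysis{e_2}{\tau_1}$. The induction hypothesis on $\C_1$ yields $\ctxclass{\C_1}{\synmode\;\tau'}{\Delta';\Gamma'}{m}$ together with the focus condition. Rule \texttt{s$\circ_1$} then assembles the result, reusing the remaining premises verbatim.

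Binders are handled the same way, except that the induction hypothesis is applied under the extended assumptions. This covers $\lambda x \mathbin{:} \tau_1.\;\C$, $\Lambda\alpha.\;\C$, $\mathbf{let}\;x = e_1\;\mathbf{in}\;\C$ and the case branches. These extended assumptions are what propagate into $\Delta';\Gamma'$. Since the context typing rules were designed one-per-(typing rule, focus position), every case is a routine repackaging of the premises.

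\textbf{Analysis clause.} Suppose $\analysis{\C[e]}{\tau}$. We invert the last analysis rule, and there are two possibilities.
\begin{itemize}
\item It is a genuine analysis rule, for example unannotated $\lambda x.\;\C_1$ against a matched arrow, or pairs and injections. Then we proceed exactly as in the synthesis cases, using the analysis-mode context rules.
\item It is subsumption. The term rule is stated with a consistency premise $\tau \sim \tau'$, matching \texttt{aSub} (the paper's \textsc{Sub} with equality is the special case). Then $\synthesis{\C[e]}{\tau'}$ holds for the same $\C$. We apply the synthesis clause to this same context, not a smaller one, and finish with \texttt{aSub}.
\end{itemize}
This is why the two clauses must be proved mutually. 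The mutual induction is on $\C$, with the synthesis clause taken before the analysis clause at equal $\C$, so the appeal to the synthesis clause at the same context is well-founded.

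\textbf{Main obstacle.} The hard step is the subsumption case when $\C = \cmark$. There the analysis derivation of $e$ may itself end in subsumption, yet we must use \texttt{a$\cmark$} directly with $m = \anamode\;\tau$. We therefore split the analysis cases as follows: whenever $\C = \cmark$, apply \texttt{a$\cmark$} immediately, whatever rule ended the derivation; only for compound $\C$ do we invert. This keeps the case analysis deterministic.

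A second, lesser obstacle is bookkeeping. We must check, rule by rule, that the additional premises of each context rule, such as $\wf{\tau_1}$, the matching equations and the sibling-branch consistency in \synr{case}, coincide with what inversion supplies. We expect this to be mechanical, mirroring the Agda mechanisation.
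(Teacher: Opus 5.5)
Your proposal is correct and is essentially the expected argument. The paper's proof is only mechanised in Agda and not given in the text, but your route is the natural one: mutual induction on $\C$ with inversion of the last typing rule. At compound contexts you discharge subsumption by applying the synthesis clause at the same $\C$ and closing with \texttt{aSub}. Your lexicographic ordering, or equivalently the strictly shrinking derivation, makes this well-founded.

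One small correction: the $\C = \cmark$ analysis case is not really an obstacle. The theorem only asks for existence, so either \texttt{a$\cmark$}, or \texttt{aSub} applied over \texttt{s$\cmark$}, would work there.
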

\end{majortheorembox}

\begin{majortheorembox}
\begin{theorem}[Plug Composition -- Soundness]\label{thm:plug-composition}
For any context $\C$, expression $e$, outer mode $d$, and focus mode $m$: if\/ $\ctxclass{\C}{d}{\Delta';\,\Gamma'}{m}$ and $e$ satisfies focus mode $m$ under $\Delta';\,\Gamma'$, then $\C[e]$ satisfies outer derivation mode $d$ under $\Delta;\,\Gamma$.
\end{theorem}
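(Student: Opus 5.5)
The plan is to prove the statement by induction on the context typing derivation $\ctxclass{\C}{d}{\Delta';\,\Gamma'}{m}$, generalising over $d$ and over the outer assumptions $\Delta;\Gamma$. Each rule of context typing is built to mirror a term typing rule at a specific focus position. So each inductive case should reduce to one application of the corresponding synthesis or analysis rule, with the premise for the sub-context replaced by the induction hypothesis.

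\textbf{Base cases.} For $\texttt{s}\cmark$ and $\texttt{a}\cmark$ we have $\C = \cmark$, $\Delta' = \Delta$, $\Gamma' = \Gamma$ and $m = d$. Since $\cmark[e] = e$, the hypothesis that $e$ satisfies $m$ is literally the conclusion.

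\textbf{Inductive cases.} Take $\texttt{s}\circ_1$ as representative.
\begin{itemize}
\item We are given $\ctxclass{\C}{\synmode\;\tau}{\Delta';\,\Gamma'}{m}$, together with $\tau \sqcup (\gap \funarr \gap) \equiv \tau_1 \funarr \tau_2$ and $\analysis{e_2}{\tau_1}$.
\item The induction hypothesis gives $\synthesis{\C[e]}{\tau}$.
\item Rule \synr{App} then yields $\synthesis{\C[e](e_2)}{\tau_2}$, which equals $(\C(e_2))[e]$ by the definition of plugging.
\end{itemize}
Case $\texttt{s}\circ_2$ is symmetric: the induction hypothesis at outer mode $\anamode\;\tau_1$ gives $\analysis{\C[e]}{\tau_1}$, and \synr{App} closes it. For binders ($\lambda x \mathbin{:} \tau.\,\C$, $\Lambda\alpha.\,\C$, $\mathbf{let}\;x = e_1\;\mathbf{in}\;\C$, and the case branches), the sub-derivation of $\C$ has outer assumptions extended by $x$ or $\alpha$. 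This is why the induction must be generalised over the outer $\Delta;\Gamma$: the hypothesis applied under $\Delta;\,\Gamma, x \mathbin{:} \tau_1$ (or $\Delta, \alpha;\,\Gamma$) gives exactly the premise the term rule needs. The side premises, such as well-formedness, matched types, and consistency of branch types in \synr{case}, are carried verbatim in the context rule, so they transfer unchanged.

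\textbf{Subsumption.} In \texttt{aSub}, the induction hypothesis gives $\synthesis{\C[e]}{\tau'}$, and we have $\tau \sim \tau'$. We then apply the subsumption rule for analysis, which in this gradual calculus should have the form ``synthesise $\tau'$ and $\tau \sim \tau'$ implies analyse against $\tau$''. Analysis-specific context rules, for example an unannotated $\lambda x.\,\C$ checked against a matched arrow, are handled in the same way as the synthesis rules, using the corresponding analysis term rule.

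\textbf{Main obstacle.} The difficulty is bookkeeping rather than conceptual. We must confirm that, for every term typing rule and every focus position, the context rule's extra premises are exactly the remaining premises of the term rule. Cases where the focus type feeds into a computed output are the delicate ones:
\begin{itemize}
\item in \synr{case}, the output $\tau_1' \sqcup \tau_2'$ depends on the focused branch's synthesised type, which is recorded in $d$;
\item in \synr{TApp}, the output is a substitution into the matched type.
\end{itemize}
Here one must check that the type recorded in the context derivation is the same one produced by the induction hypothesis. This holds because the hypothesis returns exactly the type named in the sub-context's outer mode, so no uniqueness-of-synthesis lemma is needed. The remaining cases are routine and, as the paper notes, mechanised in Agda.
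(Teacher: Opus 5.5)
Your proposal is correct and takes the same approach as the paper's mechanised argument: induction on the context typing derivation, generalised over the outer assumptions and outer mode. The identity contexts are the trivial base cases, and every other case (including \texttt{aSub}, via consistency-based subsumption) is rebuilt by applying the matching term typing rule to the induction hypothesis and the side premises the context rule carries. Your remark that no uniqueness-of-synthesis lemma is needed is also right, since the conclusion is stated at the fixed outer mode $d$ rather than existentially.
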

\end{majortheorembox}

In order to apply the same ideas as synthesis slices, we also prove a substantial static gradual guarantee (\S\ref{sec:core-typing}) for context typings, defining precision on modes $m' \sqsubseteq m$ by lifting: $\synmode\,\tau' \sqsubseteq \synmode\,\tau$ iff $\tau' \sqsubseteq \tau$, and likewise for $\anamode$. The focus mode's class itself (analysis or synthesis) is propagated unchanged through every context typing rule.

\begin{majortheorembox}
\begin{theorem}[Static Gradual Guarantee -- Context, Synthesis Position]\label{thm:graduality-syn-cls}
If\/ $\Gamma' \sqsubseteq \Gamma$, $\C' \sqsubseteq \C$, and\/ $\ctxclass[\Delta;\,\Gamma]{\C}{\synmode\,\tau_p}{\Delta_f;\,\Gamma_f}{m}$, then there exist $\tau_p'$, $\Gamma_f'$, and $m'$ with $\tau_p' \sqsubseteq \tau_p$, $\Gamma_f' \sqsubseteq \Gamma_f$, $m' \sqsubseteq m$, and\/ $\ctxclass[\Delta;\,\Gamma']{\C'}{\synmode\,\tau_p'}{\Delta_f;\,\Gamma_f'}{m'}$.
\end{theorem}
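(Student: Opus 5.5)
We proceed by induction on the context typing derivation $\ctxclass[\Delta;\,\Gamma]{\C}{\synmode\,\tau_p}{\Delta_f;\,\Gamma_f}{m}$, proved mutually with the analogous statement for analysis position. The analysis-position statement fixes the outer type $\tau_p' \sqsubseteq \tau_p$ as an input, in the style of the second clause of downwards static graduality, and its conclusion has the same shape: some $\Gamma_f' \sqsubseteq \Gamma_f$ and $m' \sqsubseteq m$ with $\ctxclass[\Delta;\,\Gamma']{\C'}{\anamode\,\tau_p'}{\Delta_f;\,\Gamma_f'}{m'}$. The mutual strengthening is needed because synthesis-mode rules such as \texttt{s$\circ_2$} recurse into analysis-mode context typings, and \texttt{aSub} recurses back into synthesis mode.

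Since precision on contexts requires the focus to sit in the same structural position, $\C'$ has the same outermost constructor as $\C$ (or both are $\cmark$). Its non-focused subterms are slices of the corresponding subterms of $\C$, and its inner context is a slice of the inner context. This lets us invert $\C' \sqsubseteq \C$ in lockstep with the last rule used.

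For the base case \texttt{s$\cmark$} we take $\tau_p' = \tau_p$, $\Gamma_f' = \Gamma'$ and $m' = \synmode\,\tau_p$. For \texttt{a$\cmark$} we take $m' = \anamode\,\tau_p'$.

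For each inductive rule, we handle its two kinds of premises separately:
\begin{itemize}
\item For the ordinary typing premises on non-focused subterms, such as $\analysis{e_2}{\tau_1}$ in \texttt{s$\circ_1$} or $\synthesis{e_1}{\tau}$ in \texttt{s$\circ_2$}, we apply downwards static graduality (\cref{thm:graduality-syn}) at the sliced assumptions $\Gamma'$.
\item For the recursive context typing premise, we apply the induction hypothesis.
\end{itemize}

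The side conditions then need re-establishing, using monotonicity lemmas for the type operators:
\begin{itemize}
\item Matching via joins: if $\tau' \sqsubseteq \tau$ and $\tau \sqcup (\gap \funarr \gap) \equiv \tau_1 \funarr \tau_2$, then $\tau' \sqcup (\gap \funarr \gap) \equiv \tau_1' \funarr \tau_2'$ with $\tau_i' \sqsubseteq \tau_i$. The same holds for products, sums and $\forall$.
\item Consistency is downward closed under precision.
\item Substitution $[\alpha \mapsto \sigma]\tau$ is monotone in both arguments.
\item Joins of consistent types are monotone, which handles the case rule's $\tau_1' \sqcup \tau_2'$.
\item Well-formedness is preserved by slicing types.
\end{itemize}
These are the same auxiliary lemmas already used to prove \cref{thm:graduality-syn}, so we reuse them.

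For \texttt{s$\circ_1$}, the induction hypothesis gives the function type $\tau' \sqsubseteq \tau$. Matching gives $\tau_1' \funarr \tau_2'$, and graduality checks the sliced argument against $\tau_1' \sqsubseteq \tau_1$, yielding outer type $\tau_2'$.

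For \texttt{s$\circ_2$}, graduality gives $e_1'$ synthesising some $\tau' \sqsubseteq \tau$. Matching gives $\tau_1'$, and the analysis-position induction hypothesis is invoked at $\tau_1' \sqsubseteq \tau_1$.

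For binders such as $\lambda x\mathbin{:}\tau.\,\C$, let and case, the focus assumptions are extended, e.g.\ to $\Gamma', x:\tau_1'$. We need this to be below $\Gamma, x:\tau_1$, which follows from pointwise precision on assumptions. For the let rule, the type $\tau_1'$ is the one synthesised by the sliced definition, obtained by graduality.

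For \texttt{aSub} in the analysis-position lemma, we apply the synthesis induction hypothesis to obtain $\tau'' \sqsubseteq \tau'$. Then $\tau_p' \sim \tau''$ follows because consistency is preserved when both sides lose precision.

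The main obstacle is the annotated-lambda analysis rule, and any other rule where the analysed type and an annotation are combined by join, e.g.\ checking $\lambda x.\,\C$ against a matched arrow, or an annotation consistent with the expected type. There the focus assumption may involve an expression like $\tau_a \sqcup \tau_1$. Shrinking both arguments must yield something below the original, which requires that $\sqcup$ be monotone on consistent pairs, and that consistency of the shrunk pair be derived from the original via \cref{lem:slices-consistent}-style reasoning. We would first isolate a lemma, monotonicity of partial joins under precision, and then discharge all such cases uniformly with it.
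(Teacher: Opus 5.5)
Your proposal is correct and follows essentially the same route as the paper's proof, which the paper itself only sketches and defers to the Agda mechanisation. That route is a mutual induction with the analysis-position guarantee (\cref{thm:graduality-ana-cls}) on the context typing derivation, with \texttt{aSub} crossing back into synthesis mode and the mode class carried unchanged. Non-focus premises are discharged by term graduality (\cref{thm:graduality-syn}), and side conditions by downward closure of consistency and monotonicity of matching, joins and substitution; in particular your partial-join lemma is the right uniform tool, since $\tau_a \sqcup \tau_1$ is a common upper bound of $\tau_a'$ and $\tau_1'$.
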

\begin{theorem}[Static Gradual Guarantee -- Context, Analysis Position]\label{thm:graduality-ana-cls}
If\/ $\Gamma' \sqsubseteq \Gamma$, $\C' \sqsubseteq \C$, $\tau_p' \sqsubseteq \tau_p$, and\/ $\ctxclass[\Delta;\,\Gamma]{\C}{\anamode\,\tau_p}{\Delta_f;\,\Gamma_f}{m}$, then there exist $\Gamma_f'$ and $m'$ with $\Gamma_f' \sqsubseteq \Gamma_f$, $m' \sqsubseteq m$, and\/ $\ctxclass[\Delta;\,\Gamma']{\C'}{\anamode\,\tau_p'}{\Delta_f;\,\Gamma_f'}{m'}$.
\end{theorem}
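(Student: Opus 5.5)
We prove \cref{thm:graduality-syn-cls,thm:graduality-ana-cls} simultaneously by induction on the context typing derivation. The mutual induction is needed because synthesis-position rules such as \texttt{s$\circ_2$} have an analysis-position context premise, while \texttt{aSub} has a synthesis-position one. Inverting $\C' \sqsubseteq \C$ shows that $\C'$ has the same outermost constructor as $\C$, since precision on contexts fixes the focus position. Moreover, every non-focus subterm of $\C'$ is less precise than the corresponding subterm of $\C$, and every type annotation in $\C'$ is below the corresponding annotation in $\C$. The output $\Delta_f$ is unchanged, because type variable contexts are not sliced and binders such as $\Lambda\alpha$ are preserved by precision.

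\textbf{Base cases.} For \texttt{s$\cmark$}, take $\tau_p' = \tau_p$, $\Gamma_f' = \Gamma'$ and $m' = \synmode\,\tau_p$. For \texttt{a$\cmark$}, take $\Gamma_f' = \Gamma'$ and $m' = \anamode\,\tau_p'$.

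\textbf{Subsumption case.} For \texttt{aSub}, the induction hypothesis for the synthesis premise gives $\tau'' \sqsubseteq \tau'$. We then need $\tau_p' \sim \tau''$. This follows from a lemma we prove first: consistency is downward closed under precision, i.e.\ if $\tau_1 \sim \tau_2$, $\tau_1' \sqsubseteq \tau_1$ and $\tau_2' \sqsubseteq \tau_2$, then $\tau_1' \sim \tau_2'$. The proof is an easy induction on the consistency derivation: any $\gap$ on either side closes the case by $\sim\gap_1$ or $\sim\gap_2$.

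\textbf{Structural rules.} Each remaining rule mirrors a term typing rule. For every side premise we appeal to \cref{thm:graduality-syn}, applied to the sliced sibling subterms under the sliced assumptions. For the recursive context premise we appeal to the induction hypothesis. The outer type $\tau_p'$ is then recomputed exactly as in the corresponding proof of \cref{thm:graduality-syn}.
\begin{itemize}
\item For rules with binders (\texttt{$\lambda x{:}\tau$}, $\mathbf{let}$, $\mathbf{case}$ branches), the extended assumptions stay ordered: $\Gamma', x{:}\tau_1' \sqsubseteq \Gamma, x{:}\tau_1$ whenever $\tau_1' \sqsubseteq \tau_1$. This gives the premise needed to invoke the induction hypothesis under the binder.
\item For rules using matching, such as $\tau \sqcup (\gap\funarr\gap) \equiv \tau_1\funarr\tau_2$, we need the lemma that matching is monotone: if $\tau' \sqsubseteq \tau$ then the matched components satisfy $\tau_1' \sqsubseteq \tau_1$ and $\tau_2' \sqsubseteq \tau_2$. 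Matching on $\gap$ yields $\gap \funarr \gap$, which is below anything.
\item For \textsc{TApp} we need monotonicity of substitution: $[\alpha\mapsto\sigma']\tau'' \sqsubseteq [\alpha\mapsto\sigma]\tau$.
\item For $\mathbf{case}$ we need monotonicity of joins of consistent types, together with the consistency lemma above, so that the sliced branch types remain consistent.
\end{itemize}

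\textbf{Analysis-position cases.} In the analysis theorem the outer type $\tau_p'$ is given rather than computed. Rules that destructure an analysed type (e.g.\ $\lambda x.\,\C$ against $\tau_p$, or a pair against a product) use matching on $\tau_p'$, and monotonicity of matching yields component types below the originals. These are passed to the analysis induction hypothesis, and the focus mode precision $m' \sqsubseteq m$ comes directly from the induction hypothesis.

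\textbf{Main obstacle.} We expect the hardest part to be the analysis-mode rules where the sibling premise depends on the sliced outer type. An example is analysing $\mathbf{case}$ or $\mathbf{let}$ against $\tau_p'$, where the non-focus branch must analyse against $\tau_p'$ rather than $\tau_p$. Here we use the analysis half of \cref{thm:graduality-syn} with its full strength: it holds for \emph{any} $\tau' \sqsubseteq \tau$. Beyond that, the difficulty is bookkeeping over the many rules, which we would discharge uniformly as in the Agda mechanisation of \cref{thm:graduality-syn}.
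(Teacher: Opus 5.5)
Your proposal is correct and follows essentially the same route as the paper's Agda-mechanised proof. That route is a mutual induction with \cref{thm:graduality-syn-cls} over the context typing derivation, which is forced by the mode crossings in \texttt{s$\circ_2$} and \texttt{aSub}, with sibling premises discharged by \cref{thm:graduality-syn} (its analysis half used at arbitrary $\tau' \sqsubseteq \tau$) and routine monotonicity lemmas for consistency, matching, joins and substitution. The one step you leave implicit is that well-formedness of type annotations is downward closed under precision, which is immediate and is needed for annotated lambdas and \textsc{TApp}.
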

\end{majortheorembox}

\section{Analysis Slices}
\label{sec:analysis}

\providecommand{\anaslice}[2]{\mathit{AnaSlice}\;#1 \mathbin{\blacktriangleleft} #2}
\providecommand{\minanaslice}[2]{\mathit{MinAnaSlice}\;#1 \mathbin{\blacktriangleleft} #2}
\providecommand{\anaposslice}[2]{\mathit{InnerAnaSlice}\;#1 \mathbin{\blacktriangleleft} #2}
\providecommand{\minanaposslice}[2]{\mathit{MinInnerAnaSlice}\;#1 \mathbin{\blacktriangleleft} #2}

\noindent With this in mind, analysis slices are simple analogues to synthesis slices, except being conducted on \textit{analysis inner mode} context typing derivations. And, as the overall programs that we consider must synthesise a type, the usual analysis slices concern derivations only in \textit{synthesis} outer derivation mode.\footnote{An extension to analysis outer modes, which is useful for efficient calculation, is discussed later.}

Intuitively, slices of the context explain why a sub-term is \textit{expected} by the surrounding context to synthesise or analyse against a given type. We typically only care about the cases where the inner mode is in \textit{analysis}; a (well-typed) surrounding context does not actually influence \textit{why} a term synthesises a type.\footnote{A context may define a variable which is used in a synthesising term, non-locally affecting the subterm's type. However, assumption slices are treated entirely \textit{separately} in both modes. That is, we do not ask the analysis slice to explain how exactly the bindings provide the minimal assumptions for the focus, only how it explains the \textit{type} at the focus.}

\subsection{Definition}
\label{sec:ana-slice-def}
Structurally, analysis slices are identical to synthesis slices (\S\ref{sec:syn-slice-def}), but with expression slices $\varsigma$ replaced by context slices $\kappa \in \lfloor\C\rfloor$.

\begin{definition}
\label{def:ana-slice}
An \textit{analysis slice} of $\mathit{Cls} : \ctxclass{\C}{\synmode\,\tau_p}{\Delta';\,\Gamma'}{\anamode\,\tau}$ on $\upsilon \in \lfloor\tau\rfloor$, written $\anaslice{\mathit{Cls}}{\upsilon}$, is a pair $(\kappa, \gamma) \in \lfloor\C, \Gamma\rfloor$ such that $\ctxclass[\Delta;\,\gamma]{\kappa}{\synmode\,\psi}{\Delta'';\,\Gamma''}{\anamode\,\phi}$ for some inner-type witness $\phi \sqsupseteq \upsilon$ (with $\phi \in \lfloor\tau\rfloor$ and $\psi \in \lfloor\tau_p\rfloor$ by graduality, \cref{thm:graduality-syn-cls}).
\end{definition}

\subsection{Minimality}
\label{sec:ana-minimality}

$\minanaslice{\mathit{Cls}}{\upsilon}$ is defined analogously to Definition~\ref{def:isminimal}, minimising over the context slice $(\kappa, \gamma)$. And as before, existence of minimal analysis slices and monotonicity follow from finiteness and graduality.

\begin{majortheorembox}
\begin{theorem}[Existence of minimal analysis slices]\label{thm:ana-min-exists}
For every $s : \anaslice{\mathit{Cls}}{\upsilon}$, there exists $m : \minanaslice{\mathit{Cls}}{\upsilon}$ with $m \sqsubseteq s$.
\end{theorem}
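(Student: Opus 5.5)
The plan mirrors the proof of \cref{thm:min-exists}, replacing the program slice lattice $\lfloor \Gamma, e\rfloor$ with the context-and-assumptions slice lattice $\lfloor \C, \Gamma\rfloor$. By \cref{thm:slice-finite}, both $\lfloor \C \rfloor$ and $\lfloor \Gamma \rfloor$ are finite, so their product $\lfloor \C, \Gamma\rfloor$ is finite too. Its order is the componentwise precision order, and that order is a partial order, so the strict order $\sqsubset$ is well-founded on it.

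Given $s = (\kappa, \gamma) : \anaslice{\mathit{Cls}}{\upsilon}$, consider the set
\[S_s = \{\, s' \in \lfloor \C, \Gamma\rfloor \mid s' \sqsubset s \text{ and } s' : \anaslice{\mathit{Cls}}{\upsilon} \,\}.\]
This set is finite because it is a subset of the finite lattice. Membership is decidable in principle: we would need to check whether some context typing $\ctxclass[\Delta;\,\gamma']{\kappa'}{\synmode\,\psi}{\Delta'';\,\Gamma''}{\anamode\,\phi}$ exists with $\phi \sqsupseteq \upsilon$. For a classical existence proof decidability is unnecessary. For a constructive, Agda-style proof it would follow from decidability of the context typing judgement on the finitely many candidates, since the outer type $\psi$ and focus data are determined algorithmically by the bidirectional rules.

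The argument is then by well-founded induction on $s$ with respect to $\sqsubset$. If $S_s$ is empty, $s$ is minimal in $\anaslice{\mathit{Cls}}{\upsilon}$. To see this, take any analysis slice $a' \sqsubseteq s$. If $a' \neq s$, then $a' \sqsubset s$ and $a' \in S_s$, a contradiction. So $a' = s$, which is exactly Definition~\ref{def:isminimal}, and we take $m = s$. Otherwise, pick some $s' \in S_s$. The induction hypothesis gives a minimal $m \sqsubseteq s'$, and transitivity yields $m \sqsubseteq s$.

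Graduality (\cref{thm:graduality-syn-cls}) is not needed for existence itself. It only guarantees that the witnesses $\phi$ and $\psi$ lie in $\lfloor\tau\rfloor$ and $\lfloor\tau_p\rfloor$, so the definition is well-posed.

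The one step needing care is the bookkeeping of precision on contexts. Context precision $\kappa' \sqsubseteq \kappa$ requires the focus to stay in the same structural position. We must therefore confirm that $\lfloor \C\rfloor$ really consists only of contexts with the focus in place, so that it is finite and every element is a legitimate candidate context. I expect this to be immediate from the definition of context precision and \cref{thm:slice-finite}.

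If the Agda mechanisation requires an explicit decision procedure, the main obstacle becomes proving that context typing is decidable. I would first try to obtain this from the syntax-directedness of the rules. The only non-syntax-directed rule is \texttt{aSub}, whose consistency premise $\tau \sim \tau'$ is decidable. Alternatively, decidability can be avoided altogether by appealing to the well-foundedness of $\sqsubset$ on a finite poset, together with classical choice of an element of $S_s$.
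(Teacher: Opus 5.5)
Your proposal is correct and takes the same approach as the paper, which derives this theorem the same way as \cref{thm:min-exists}: finiteness of $\lfloor \C, \Gamma\rfloor$ makes the set of strictly smaller analysis slices finite, and recursing on any member terminates at a minimal slice below $s$. Your side remarks are sound: graduality is needed only for well-posedness of the witnesses, not for existence, and decidability is needed only for a constructive reading of the proof.
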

\end{majortheorembox}

\begin{majortheorembox}
\begin{theorem}[Monotonicity of minimal analysis slices]\label{thm:ana-mono}
If $\upsilon_1 \sqsubseteq \upsilon_2$ in $\lfloor\tau\rfloor$ and\\$m_2 : \minanaslice{\mathit{Cls}}{\upsilon_2}$, then there exists $m_1 : \minanaslice{\mathit{Cls}}{\upsilon_1}$ with $m_1 \sqsubseteq m_2$.
\end{theorem}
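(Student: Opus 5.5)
The plan is to reduce monotonicity to two facts: validity is upward-closed in the query, and Theorem~\ref{thm:ana-min-exists} supplies a minimal slice below any given one. First, I observe that $m_2 : \minanaslice{\mathit{Cls}}{\upsilon_2}$ is in particular an analysis slice for $\upsilon_1$. By Definition~\ref{def:ana-slice}, $m_2 = (\kappa_2, \gamma_2)$ comes with a context typing
\[\ctxclass[\Delta;\,\gamma_2]{\kappa_2}{\synmode\,\psi}{\Delta'';\,\Gamma''}{\anamode\,\phi}\]
for some inner-type witness $\phi \sqsupseteq \upsilon_2$. Transitivity of type precision then gives $\phi \sqsupseteq \upsilon_2 \sqsupseteq \upsilon_1$. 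The same witness derivation therefore shows $m_2 : \anaslice{\mathit{Cls}}{\upsilon_1}$, and membership $(\kappa_2,\gamma_2) \in \lfloor \C, \Gamma \rfloor$ is unchanged, since it does not depend on the query.

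Second, I apply Theorem~\ref{thm:ana-min-exists} to $s := m_2$, now viewed as an element of $\anaslice{\mathit{Cls}}{\upsilon_1}$. This yields $m_1 : \minanaslice{\mathit{Cls}}{\upsilon_1}$ with $m_1 \sqsubseteq m_2$, which is exactly the claim. Minimality of $m_2$ for $\upsilon_2$ is never used. This matches the paper's remark that monotonicity only holds downwards and that $m_1$ need not be unique.

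The argument is the same as for Theorem~\ref{thm:mono}, with context slices in place of expression slices. The only point needing care is that the existence theorem, and the finiteness of $\lfloor \C, \Gamma \rfloor$ underlying it (Theorem~\ref{thm:slice-finite}), are applied for the fixed derivation $\mathit{Cls}$ with the refined query $\upsilon_1$. The existence theorem is stated for an arbitrary query in $\lfloor \tau \rfloor$, and $\upsilon_1 \in \lfloor\tau\rfloor$ holds by hypothesis. Hence no appeal to graduality of context typing (Theorem~\ref{thm:graduality-ana-cls}) is needed beyond what the existence theorem already uses, and I expect no step to be a genuine obstacle.
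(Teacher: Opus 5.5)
Your proof is correct and matches the paper's brief justification (``follow from finiteness and graduality''). Validity is upward-closed in the query by transitivity of precision, so $m_2$ is already an $\mathit{AnaSlice}$ on $\upsilon_1$, and the finiteness-based existence theorem (Theorem~\ref{thm:ana-min-exists}) then gives a minimal $m_1 \sqsubseteq m_2$; you are also right that reusing $m_2$'s own witness derivation means graduality is needed no further.
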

\end{majortheorembox}

\subsection{Minimally Scoped Analysis Slices}
\noindent As mentioned earlier, a synthesising term derives all of its type information internally.\footnote{Again, given that assumptions are tracked externally.} Therefore, the type is actually only enforced by the context up until the innermost subderivation with a synthesis outer derivation mode.

For example, for an annotated term nested inside another annotation, only the inner annotation actually enforces a type on the inner term. Therefore, an algorithm need only check this region (the outer context can be fully omitted).

\section{Error Marking \& Type Error Debugging}
\label{sec:marking}

This section formally applies type slicing to static error squiggles as in \cref{fig:explain-error} (\S\ref{sec:slicing-by-example}). It demonstrates that we can not only use this calculus to ``explain types'' in well-typed programs, but also effectively ``explain type errors'' in ill-typed programs.

This builds upon the theory of type error marking for bidirectional systems with holes of Zhao et al.~\cite{Marking2024}, which we briefly recap next and extend to the type slicing core calculus (\S\ref{sec:core-calculus}).
 
\subsection{Error Marking}
\label{sec:marking-calculus}

A marking algorithm is a pair of bidirectional judgements,
\[
  \synmark{e}{\chk e}{\tau}
  \qquad\text{and}\qquad
  \anamark{e}{\chk e}{\tau},
\]
reading ``under $\Delta;\Gamma$, the source expression $e$ \textit{marks
to} the marked expression $\chk e$, synthesising (resp.\ being analysed
against) type $\tau$''. The marked output $\chk e$ has the same
structure as $e$ but wraps each point of local type failure in a \emph{mark}. The marks, representing highlighting an entire erroneous term $e$, are tabulated in \cref{tab:mark-forms} alongside the corresponding typing rule failures. 

The core calculus studied gives rise to four classes of errors: \textit{type inconsistencies} when subsumption cannot be applied, \textit{shape errors} when rules expect a sub-term to join (match) a specific type shape, \textit{mode limitations} where syntax which cannot synthesise its type is expected to provide type information; and variable \textit{scoping errors}, which do not pertain to types and so cannot be debugged by type slicing.

\begin{table}[h]
\centering
\small
\begin{tabularx}{\linewidth}{lll : X}
\toprule
\textbf{Mark} & \textbf{Class} & \textbf{Failed Rule(s)} & \textbf{Error} \\
\midrule
$\markincon[e]{\tau'}{\tau}$ & type inconsistency & Subsumption & $e$ synthesises $\tau'$ inconsistent with analysis type $\tau$ \\
$\markarrow[e]$       & shape & Function Application & $e$ is in a function position, but is not a function \\
$\marksum[e]$         & shape & Case Expression & $e$ is a case scrutinee, but is not a sum type \\
$\markprod[e]$        & shape & Product Projection & $e$ is projected as a product, but is not a product \\
$\markforall[e]$      & shape & Type Application & $e$ is in a type function position, but is not a type function \\
$\marklam[\lambda x.\,e]$ & mode limitation & Lambda Synthesis & Unannotated lambda in synthesis position \\
$\marklam[\lambda x.\,e]$ & mode limitation & Lambda Analysis  & Unannotated lambda analysed against a non-function type \\
$\markvar{k}$      & scope & Variables & Variable $k$ has no binding in scope \\
\bottomrule
\end{tabularx}
\caption{Error Mark Forms Classified}
\label{tab:mark-forms}
\end{table}

Importantly, marked terms can be type-checked by treating the marks as `non-empty' holes, synthesising $\gap$ (and recursively marking and type-checking the inner term).

The validity of this method is formalised by two theorems: \textit{totality} ensures marks account for every possible typing error, and \textit{erasure} ensures adding marks does not change a term's structure.

\begin{theorem}[Totality of Marking]
\label{thm:mark-total}

For every expression $e$ and assumptions $\Delta;\Gamma$, there exist a marked expression $\chk e$ and type $\tau$ such that
$\synmark{e}{\chk e}{\tau}$. And, for every type $\tau'$, there exists
$\chk e'$ such that $\anamark{e}{\chk e'}{\tau'}$.
\end{theorem}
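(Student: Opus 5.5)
We prove both halves of the statement simultaneously, by structural induction on the source expression $e$, generalising over $\Delta$, $\Gamma$ and (for the analysis half) the type $\tau'$. The marking judgements are designed to be syntax-directed, with exactly one family of rules per syntactic form: the ordinary typing rule plus, for each way that rule can fail, a marking variant from \cref{tab:mark-forms}. So in each case it suffices to show that the rule's side conditions always fall into one of these alternatives, with every recursive premise supplied by the induction hypothesis. The hypothesis is stated for arbitrary assumptions and arbitrary analysis types, so it may be used at the extended assumptions $\Gamma, x \mathbin{:} \tau_1$ and $\Delta, \alpha$ and at whatever analysis type the rule asks for.

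\textbf{Synthesis half.} We go through the constructors in turn.

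For $\gap$ and $()$ we use the synthesis rules directly. For a variable $x$ we split on whether $x \in \mathrm{dom}(\Gamma)$: if it is, we apply \synr{Var}; otherwise we produce the mark $\markvar{x}$, which synthesises $\gap$.

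For an annotated lambda we invoke the hypothesis on the body under $\Gamma, x \mathbin{:} \tau_1$. Well-formedness of $\tau_1$ either holds or is handled by the corresponding marking rule. An unannotated lambda in synthesis position is wrapped in $\marklam$, and its body is marked recursively under $x \mathbin{:} \gap$.

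For application, projection, case and type application we first mark the head or scrutinee by the hypothesis, obtaining some synthesised $\tau$. We then decide whether $\tau$ is consistent with the required shape, namely $\gap\funarr\gap$, $\gap\times\gap$, $\gap+\gap$ or $\forall\alpha.\gap$. Consistency with these shapes is decidable, and whenever it holds the corresponding join exists. If the check succeeds, we match the type and recurse on the remaining subterms: in analysis mode against the extracted domain for application, and in synthesis mode for the other cases. If it fails, we wrap the head in the appropriate shape mark, treat it as synthesising $\gap$, and proceed with $\gap$ components.

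For \synr{case} there is a further side condition $\tau_1' \sim \tau_2'$ on the branch types. We decide it, and if it fails we use the branch-inconsistency marking rule. Pairs, injections, $\Lambda$ and \synr{let} follow immediately from the hypothesis.

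\textbf{Analysis half.} Given $\tau'$, we first consider the forms that have dedicated analysis rules: unannotated lambdas, and also pairs, injections and $\Lambda$ if the full rule set has them. For each such form we decide whether $\tau'$ matches the relevant shape via the join. If it does, we recurse on the components at the extracted types. If it does not, we either fall back to subsumption or apply the mode-limitation mark, as the rules dictate.

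Every other form goes through subsumption. We use the synthesis half, already established for the same $e$ (or proved by the same induction), to obtain $\chk e$ and $\tau$. Consistency $\tau \sim \tau'$ is decidable: it is a simple structural check on the finite type syntax. If it holds, we apply marked subsumption; otherwise we wrap the result in $\markincon[\chk e]{\tau}{\tau'}$.

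\textbf{Main obstacle.} The real risk is not the induction, which is routine, but exhaustiveness. We must check, against the complete rule set in the supplementary material, that every premise of every typing rule that can fail (shape matching, consistency, well-formedness, scope) has a corresponding marking rule. We must also check that the join-based matching $\tau \sqcup (\gap\funarr\gap)$ is defined precisely when $\tau$ is consistent with the shape. Otherwise there would be a gap in which neither the typed rule nor the marked rule applies.

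I would discharge this by proving, for each shape, a small decidability lemma: either $\tau \sim \text{shape}$ and the join has the expected form, or $\tau \not\sim \text{shape}$. This lemma is then used uniformly across all the cases. I would also take care over the mutual dependence between analysis of an unannotated lambda and synthesis of its body, which is why we induct on $e$ with $\tau'$ universally quantified.
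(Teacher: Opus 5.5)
Your proposal is correct and takes essentially the same route as the paper's mechanised argument, which follows Zhao et al. That argument is a mutual structural induction on $e$, generalised over $\Delta$, $\Gamma$ and the analysis type: the synthesis half for $e$ is established before the analysis half for the same $e$, so that subsumption can appeal to it, and decidability of consistency and of join-based matching sends every side condition either to the ordinary rule or to its marked counterpart. The only remaining work is the exhaustiveness check you already flag, namely confirming against the full rule set that each premise that can fail has a marking rule; this includes branch consistency in \synr{case} and well-formedness of annotations, neither of which appears in \cref{tab:mark-forms}.
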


\begin{theorem}[Erasure of Marking]
\label{thm:mark-erase}

If $\synmark{e}{\chk e}{\tau}$ then $\mathrm{erase}(\chk e) \equiv e$, and
likewise for analysis.
\end{theorem}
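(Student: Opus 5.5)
We prove the statement by mutual structural induction on the marking derivations $\synmark{e}{\chk e}{\tau}$ and $\anamark{e}{\chk e}{\tau}$. Erasure is defined structurally: it strips every mark form in \cref{tab:mark-forms} and otherwise commutes with each syntactic constructor, so $\mathrm{erase}(\markincon[\chk e]{\tau'}{\tau}) = \mathrm{erase}(\chk e)$, $\mathrm{erase}(\markarrow[\chk e]) = \mathrm{erase}(\chk e)$, and so on. Unbound-variable marks are the one exception: $\markvar{k}$ erases to the variable $k$ itself.

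We do the synthesis and analysis statements simultaneously, because the rules interleave modes. Analysis premises occur inside synthesis rules, for instance the argument in application. Conversely, the subsumption and inconsistency rules for analysis appeal to a synthesis premise.

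For each marking rule, the source expression in the conclusion is a constructor applied to source subterms $e_i$ and possibly types. The output is the same constructor, possibly wrapped in a mark, applied to the marked subterms $\chk e_i$ from the premises. Type annotations are copied unchanged: we mark expressions, not types, and well-formedness failures of types would be recorded in the type itself. By the induction hypotheses, $\mathrm{erase}(\chk e_i) \equiv e_i$. Erasure commutes with the constructor and removes the outer mark, so the conclusion follows.

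The leaf cases are handled directly. Holes and unit erase to themselves. A bound variable marks to itself, and an unbound variable marks to $\markvar{x}$, which erases back to $x$.

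Binders such as $\lambda x.\,e$, $\mathbf{let}$, $\mathbf{case}$ and $\Lambda\alpha$ need one check. The extended assumptions in their premises change only the typing outputs, not the syntactic correspondence, and the induction hypothesis is stated for all $\Delta;\Gamma$. For the same reason the equality is taken up to $\alpha$-equivalence, matching the syntax convention of Figure~\ref{fig:syntax}.

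We expect no step to be genuinely hard; the main obstacle is bookkeeping. Several rules produce a mark around a subterm that was itself marked in a \emph{different} mode. Examples are the shape marks $\markarrow$, $\marksum$, $\markprod$ and $\markforall$ around a synthesised scrutinee or function, and the $\marklam$ rules, which mark the lambda in synthesis while its body is marked under analysis against $\gap$ or similar. For each such rule we must confirm that the mark is a single wrapper whose contents are exactly the recursively marked subterm. In particular, the marking rules must never discard or duplicate source subterms, for example by replacing an erroneous subterm with a hole. This should hold by inspection of each rule in the supplementary marking rule set. In the Agda mechanisation we would expect the whole proof to reduce to one clause per rule, each closed by \texttt{cong} on the induction hypotheses.
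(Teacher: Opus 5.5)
Your proposal is correct and takes essentially the paper's approach: the paper gives no written proof and defers to the marking theory of Zhao et al.\ and the Agda mechanisation, whose argument is this same mutual rule induction on the synthesis and analysis marking derivations, with each case closed by congruence on the induction hypotheses once the single outer mark is erased. One loose end is your remark on type annotations: if ill-formed annotations were marked inside the type rather than copied verbatim, you would also need the analogous (equally routine) erasure lemma for marked types.
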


\subsection{Marked Context Typing}
\label{sec:marking-classification}

In this style, we extend context typing (\S\ref{sec:context-typing}) to \textit{marked} context typing $\mctxclass[\Delta;\,\Gamma]{\C}{\chk{\C}}{p}{\Delta';\,\Gamma'}{m}$, threading a marked context $\chk{\C}$ alongside $\C$ (with context marks defined analogously to expressions), adding one error rule per mark form of \cref{tab:mark-forms}. We have proven analogous totality and soundness theorems hold for marked contexts, mirroring the unmarked plug decomposition and composition (\cref{thm:plug-decomposition,thm:plug-composition}).

\subsection{The Debugging Recipe by Example}
\label{sec:marking-recipes}

We may now relate the tools developed over the last four sections to explain type errors:
\begin{enumerate}
\item Place the focus (cursor) at some marked term. This splits the marked program into a marked context $\chk{\mathcal{C}}$ and marked focus $\chk e$, with $\chk{\mathcal{C}}[\chk e]$ recovering the program; a marked context typing at the focus is then derivable by totality of marked context typing.
\item Synthesis slice the internal type information, the term at the \emph{focus}, $\chk e$, if the focus mode $m$ is in synthesis ($\synmode\;\tau$).
\item Analysis slice the external type information, the \emph{context typing} itself, if the focus mode $m$ is in analysis ($\anamode\;\tau$).
\item Pick queries to explore the error. For type-inconsistency marks \textit{both} the synthesis and analysis slices exist (two context typings exist from step 1). From these, \textit{only} the inconsistent parts need be queried to explain the error. For shape errors, simply query just the outermost shape of the synthesised type; mode-limitation errors are analogous on the analysis side.
\end{enumerate}

We give some worked examples, notating with a \markbox{\text{marked term}}, a \sli{\text{synthesis slice}}, and/or an \sli[OliveGreen]{\text{analysis slice}}. `Purely structural' slice regions which do not add or manipulate type information (i.e. bindings) are in a lighter shade.

\subsubsection{Type inconsistency: {$\markincon{\tau'}{\tau}$}}
\label{sec:marking-recipes-incon}

Consider the program
\[
  \mathbf{let}\;p : 1 \funarr 1 = (\texttt{true},\,\texttt{false})\;\mathbf{in}\;p.
\]
The RHS $(\texttt{true},\,\texttt{false})$ synthesises $\texttt{Bool} \times \texttt{Bool}$ but is analysed against $1 \funarr 1$. Marking inserts
$\markincon{\texttt{Bool} \times \texttt{Bool}}{1 \funarr 1}$
around the RHS:
\[
  \mathbf{let}\;p : 1 \funarr 1 = \markbox{\markincon[(\texttt{true},\,\texttt{false})]{\texttt{Bool} \times \texttt{Bool}}{1 \funarr 1}}\;\mathbf{in}\;p.
\]
Marking would ordinarily report the error $\markincon{\texttt{Bool} \times \texttt{Bool}}{1 \funarr 1}$ with the entire term highlighted. However, in reality, only the outermost type constructors are actually causing the (initial) type inconsistency, and type slicing allows query refinements of the full types. In full:

\noindent\paragraph{\textbf{Decomposition}}
The (context, focus) pair the marking factors into is
\[
  \mathcal{C} \;=\; \mathbf{let}\;p : 1 \funarr 1 = \cmark\;\mathbf{in}\;p,
  \qquad
  e \;=\; (\texttt{true},\,\texttt{false}),
\]
with the focus at $\anamode\,(1 \funarr 1)$. The two
debugging queries run on the two sides of this split.

\noindent\paragraph{\textbf{Synthesis Slice}}
$\mathit{MinSynSlice}$ at the minimal type-slice $\upsilon \sqsubseteq \texttt{Bool} \times \texttt{Bool}$ that still witnesses $\upsilon \not\sim (1 \funarr 1)$:
\[
  e \;\rightsquigarrow\; (\,\texttt{true}\,\sli{,}\,\texttt{false}\,)
  \qquad\text{with}\qquad
  \upsilon = \gap\,{\times}\,\gap.
\]

\noindent\paragraph{\textbf{Analysis Slice}}
$\mathit{MinAnaSlice}$ at the minimal $\psi \sqsubseteq 1 \funarr 1$ that still witnesses $(\texttt{Bool} \times \texttt{Bool}) \not\sim \psi$:
\[
  \mathcal{C} \;\rightsquigarrow\;
  \slist{\mathbf{let}}\;p\,\slist{:}\,1\,\sli[OliveGreen]{\funarr}\,1\,\slist{=}\,\cmark\,\slist{\mathbf{in}}\;p
  \qquad\text{with}\qquad
  \psi = \gap\,{\funarr}\,\gap.
\]

\paragraph{\textbf{Full view:}} The slices pick out exactly what each side contributes to the failure.

\queryblock{\texttt{Bool}\sli{\times}\texttt{Bool}\;\;(\equiv\;\gap \times \gap)}{\texttt{1}\sli[OliveGreen]{\funarr}\texttt{1}\;\;(\equiv\;\gap \funarr \gap)}
\[
  \slist{\mathbf{let}}\;p\,\slist{:}\,1\,\sli[OliveGreen]{\funarr}\,1\,\slist{=}\,
  \markbox{\markincon[(\,\texttt{true}\,\sli{,}\,\texttt{false}\,)]{\texttt{Bool} \times \texttt{Bool}}{1 \funarr 1}}\,
  \slist{\mathbf{in}}\;p.
\]

We can see that this recipe does not \textit{arbitrarily} blame either the synthesising term or the analysing context for the error; after all, errors are commonly in annotations, especially when assigning types in a gradual language (49\%)~\cite{TypeAnnotationStudy}.

However, type slicing is flexible. A user who \emph{does} want to trust the surrounding context (annotations), as is often done in conventional type error highlighting, can simply only query the synthesis-side. Or vice versa, to trust the implementation, which is useful when refactoring code to a different type, which conflicts with existing annotations.

\subsubsection{Shape mismatches: $\markarrow$, $\marksum$, $\markprod$, $\markforall$}
\label{sec:marking-recipes-shape}

The shape-mismatch marks all occur in similar situations: the syntactic position of the focus requires a type kind, but the focus's synthesised type does not match this.

\vspace{3pt}\noindent
\begin{minipage}[c]{0.68\linewidth}
\hspace*{9pt}A representative example is $e = \texttt{((), ())}\,(\,\texttt{true}\,)$, where $\texttt{((),())}$ is syntactically in a function position, but synthesises $\texttt{1}\times\texttt{1}$, which is \textit{not} a function, i.e. does not join with $\gap \funarr \gap$. We can therefore query just why it is a product, and not a function.
\end{minipage}\hfill
\begin{minipage}[c]{0.28\linewidth}
\centering
\queryblock{\texttt{1}\sli{\times}\texttt{1}}{\text{---}}
$\markbox{\markarrow[(\texttt{()}\sli,\texttt{()})]}\,(\,\texttt{true}\,).$
\end{minipage}

\subsubsection{Unannotated Lambda against Non-Arrow: $\marklam$}
\label{sec:marking-recipes-lam}

Unannotated lambdas $\lambda y.\,e$ have no synthesis rule; their type can only be
determined by the surrounding context. Consequently it has
no synthesis type, so type inconsistencies are not marked via subsumption, but instead by a mode limitation mark. In this situation all the relevant type information is derived from the surrounding context, so we use analysis slices.

\vspace{3pt}\noindent
\begin{minipage}[c]{0.58\linewidth}
\hspace*{9pt}Consider the program $\mathbf{let}\;b : \texttt{Bool} = \lambda y.\,y\;\mathbf{in}\;b$.
The RHS $\lambda y.\,y$ is a bare lambda analysed against
$\texttt{Bool}$. Taking the $\mathit{MinAnaSlice}$ on the enclosing context witnesses the context enforcing a non-function type (\texttt{Bool}):
\end{minipage}\hfill
\begin{minipage}[c]{0.38\linewidth}
\centering
\queryblock{\text{---}}{\sli[OliveGreen]{\texttt{Bool}}}
$\slist{\mathbf{let}}\;b\,\slist{:}\,\sli[OliveGreen]{\texttt{Bool}}\,\slist{=}\,
  \markbox{\marklam[\lambda y.\,y]}\,\slist{\mathbf{in}}\;b.$
\end{minipage}

\subsubsection{Unbound variable: $\markvar{k}$}
\label{sec:marking-recipes-unbound}

The free-variable mark records a \emph{scope} failure. There are no types involved, so type slicing is inapplicable.

\section{Optimising Type Slice Calculation}
\label{sec:algorithms}

\providecommand{\tmslice}[2]{\mathit{TermMin}\;#1 \mathbin{\blacktriangleleft} #2}
\providecommand{\mintmslice}[2]{\mathit{MinTermMin}\;#1 \mathbin{\blacktriangleleft} #2}
\providecommand{\syntypeslice}[2]{\mathit{TypeSlice}^{\synmode}\;#1 \mathbin{\blacktriangleleft} #2}
\providecommand{\anatypeslice}[2]{\mathit{TypeSlice}^{\anamode}\;#1 \mathbin{\blacktriangleleft} #2}
\providecommand{\tmcalc}[5]{#1 \mathbin{{\color{BrickRed}\blacktriangleleft}} #2 \;{\color{BrickRed}\rightsquigarrow}\; #3 \mathbin{\synmode} #4 \,\dashv\, #5}
\providecommand{\tmrule}[7][\Delta;\,\Gamma]{
  #1 \vdash #2 \mathbin{\synmode} #3
  \mathbin{{\color{BrickRed}\blacktriangleleft}} #4
  \;{\color{BrickRed}\rightsquigarrow}\; #5
  \mathbin{\synmode} #6
  \,\dashv\, #7
}

\providecommand{\stkU}[2]{
  \tikz[baseline=(N.base)]{
    \node[inner sep=0pt, outer sep=0pt] (N) {\strut$#2$};
    \foreach \col/\off in {#1}{
      \draw[\col, line width=1.2pt, line cap=butt, overlay]
        ([xshift=-3pt,yshift=\off]N.south west) -- ([xshift=3pt,yshift=\off]N.south east);
    }
    \path ([yshift=-10pt]N.south west) ([yshift=-10pt]N.south east);
  }
}
\providecommand{\tokensp}{\mskip8mu}
\providecommand{\sABCD}[1]{\stkU{BrickRed/-1pt,NavyBlue/-3.4pt,OliveGreen/-5.8pt,Mulberry/-8.2pt}{#1}}
\providecommand{\sABC}[1]{\stkU{BrickRed/-1pt,NavyBlue/-3.4pt,OliveGreen/-5.8pt}{#1}}
\providecommand{\sABD}[1]{\stkU{BrickRed/-1pt,NavyBlue/-3.4pt,Mulberry/-8.2pt}{#1}}
\providecommand{\sAC}[1]{\stkU{BrickRed/-1pt,OliveGreen/-5.8pt}{#1}}
\providecommand{\sAD}[1]{\stkU{BrickRed/-1pt,Mulberry/-8.2pt}{#1}}
\providecommand{\sBC}[1]{\stkU{NavyBlue/-3.4pt,OliveGreen/-5.8pt}{#1}}
\providecommand{\sBD}[1]{\stkU{NavyBlue/-3.4pt,Mulberry/-8.2pt}{#1}}
\providecommand{\sC}[1]{\stkU{OliveGreen/-5.8pt}{#1}}
\providecommand{\sD}[1]{\stkU{Mulberry/-8.2pt}{#1}}
\providecommand{\sPone}[1]{\stkU{OliveGreen/-1pt}{#1}}
\providecommand{\sPtwo}[1]{\stkU{Mulberry/-3.4pt}{#1}}
\providecommand{\sPboth}[1]{\stkU{OliveGreen/-1pt,Mulberry/-3.4pt}{#1}}

\bigskip
\noindent The theory of synthesis and analysis slices (\S\ref{sec:synthesis}, \S\ref{sec:analysis}) establishes \textit{which} slices exist, and provides a brute-force algorithm. This section explores these structures further to improve the brute-force algorithm.

\subsection{Products of Synthesis Slices}
\label{sec:product-revisited}

We earlier suggested recursively calculating minimal slices of subterms and composing them together (\S\ref{sec:decompositions}). However, this cannot be done naively: the product of two minimal slices is not necessarily minimal. Consider the following if statement:
\[ \Gamma = (x \mathbin{:} \mathbf{1} \funarr \mathbf{1},\; y \mathbin{:} \mathbf{1} \funarr \mathbf{1}), \qquad e = \mathbf{if}\;\mathbf{true}\;\mathbf{then}\;x\;\mathbf{else}\;y, \qquad \tau = \mathbf{1} \funarr \mathbf{1}, \]
and ask for slices of $D : \synthesis{e}{\tau}$ at the full query $\upsilon = \mathbf{1} \funarr \mathbf{1}$. There are \textit{four} incomparable minimal full slices, displayed together as underlines:

\begin{center}
\begin{tikzpicture}[baseline=(t.base), every node/.style={inner sep=3pt}]
  \node[inner sep=0pt] (ax) {$x : \sAC{\mathbf{1}} \mathrel{\sABC{\funarr}} \sBC{\mathbf{1}}$};
  \node[inner sep=0pt, anchor=west] (xmark) at (ax.west) {$\phantom{x : }$};
  \node[inner sep=0pt, right=3.5em of ax] (ay) {$y : \sBD{\mathbf{1}} \mathrel{\sABD{\funarr}} \sAD{\mathbf{1}}$};
  \node[inner sep=0pt, anchor=west] (ymark) at (ay.west) {$\phantom{y : }$};
  \node[inner sep=0pt] at ($(ax.east)!0.3!(ay.west)$) {$,$};
  \node[draw, line width=0.9pt, rounded corners, fit=(ax)(ay),
        label={[font=\scriptsize\itshape]above:assumptions}] (a) {};
  \coordinate (col1) at (xmark.east |- a.south);
  \coordinate (col2) at ($(ax.east |- a.south) + (3pt,0)$);
  \coordinate (col3) at (ymark.east |- a.south);
  \node[font=\tiny\bfseries, BrickRed,   anchor=east, inner sep=1pt] at ([yshift=12pt]col1)  {(A)};
  \node[font=\tiny\bfseries, OliveGreen, anchor=east, inner sep=1pt] at ([yshift=7.2pt]col1) {(C)};
  \node[font=\tiny\bfseries, NavyBlue,   anchor=west, inner sep=1pt] at ([yshift=9.6pt]col2) {(B)};
  \node[font=\tiny\bfseries, NavyBlue,   anchor=east, inner sep=1pt] at ([yshift=9.6pt]col3) {(B)};
  \node[font=\tiny\bfseries, BrickRed,   anchor=west, inner sep=1pt] at ([yshift=12pt]a.south east)  {(A)};
  \node[font=\tiny\bfseries, Mulberry,   anchor=west, inner sep=1pt] at ([yshift=4.8pt]a.south east) {(D)};
  \node[right=5em of a] (t)
       {$\sABCD{\mathbf{if}}\tokensp\mathbf{true}\tokensp\sABCD{\mathbf{then}}\tokensp\sABC{x}\tokensp\sABCD{\mathbf{else}}\tokensp\sABD{y}$};
  \node[font=\tiny\bfseries, BrickRed,   anchor=east, inner sep=1pt] at ([yshift=12pt]t.south west)   {(A)};
  \node[font=\tiny\bfseries, NavyBlue,   anchor=west, inner sep=1pt] at ([yshift=9.6pt]t.south east)  {(B)};
  \node[font=\tiny\bfseries, OliveGreen, anchor=east, inner sep=1pt] at ([yshift=7.2pt]t.south west)  {(C)};
  \node[font=\tiny\bfseries, Mulberry,   anchor=west, inner sep=1pt] at ([yshift=4.8pt]t.south east)  {(D)};
\end{tikzpicture}
\end{center}

\begin{description}
\item[{\color{BrickRed}(A)}] Both branches are retained; the assumption slice keeps $x$'s domain only and $y$'s codomain only (joining to $1 \funarr 1$).
\item[{\color{NavyBlue}(B)}] Symmetric to (A), taking the opposite choices for the assumption slice
\item[{\color{OliveGreen}(C)}] \textbf{Else}-branch omitted.
\item[{\color{Mulberry}(D)}] Symmetric to (C): \textbf{then}-branch omitted.
\end{description}

Slices {\color{BrickRed}(A)} and {\color{NavyBlue}(B)} are minimal because each variable is sliced \textit{differently} across the two uses, and none are \textit{program} slices of each other. However, notice that {\color{BrickRed}(A)} and {\color{NavyBlue}(B)} are not subslices of the product with $x$, where only 2 minima exist:

\begin{center}
\begin{tikzpicture}[baseline=(t.base), every node/.style={inner sep=3pt}]
  \node[inner sep=0pt] (ax) {$x : \sPboth{\mathbf{1}\funarr\mathbf{1}}$};
  \node[inner sep=0pt, anchor=west] (xmark) at (ax.west) {$\phantom{x : }$};
  \node[inner sep=0pt, right=2.5em of ax] (ay) {$y : \sPtwo{\mathbf{1}\funarr\mathbf{1}}$};
  \node[inner sep=0pt, anchor=west] (ymark) at (ay.west) {$\phantom{y : }$};
  \node[inner sep=0pt] at ($(ax.east)!0.3!(ay.west)$) {$,$};
  \node[draw, line width=0.9pt, rounded corners, fit=(ax)(ay),
        label={[font=\scriptsize\itshape]above:assumptions}] (a) {};
  \coordinate (col1) at (xmark.east |- a.south);
  \coordinate (col2) at ($(ax.east |- a.south) + (3pt,0)$);
  \coordinate (col3) at (ymark.east |- a.south);
  \node[font=\tiny\bfseries, OliveGreen, anchor=east, inner sep=1pt] at ([yshift=12pt]col1)  {(C)};
  \node[font=\tiny\bfseries, Mulberry,   anchor=west, inner sep=1pt] at ([yshift=9.6pt]col2) {(D)};
  \node[font=\tiny\bfseries, Mulberry,   anchor=east, inner sep=1pt] at ([yshift=9.6pt]col3) {(D)};
  \node[right=5em of a] (t)
       {$\sPboth{(}\tokensp\sPboth{\mathbf{if}}\tokensp\mathbf{true}\tokensp\sPboth{\mathbf{then}}\tokensp\sPone{x}\tokensp\sPboth{\mathbf{else}}\tokensp\sPtwo{y}\tokensp\sPboth{\mathbf{,}}\tokensp\sPboth{x}\tokensp\sPboth{)}$};
  \node[font=\tiny\bfseries, OliveGreen, anchor=east, inner sep=1pt] at ([yshift=12pt]t.south west)  {(C)};
  \node[font=\tiny\bfseries, Mulberry,   anchor=west, inner sep=1pt] at ([yshift=9.6pt]t.south east) {(D)};
\end{tikzpicture}
\end{center}

When this expression is embedded in the product $(\mathbf{if}\;\mathbf{true}\;\mathbf{then}\;x\;\mathbf{else}\;y,\; x)$, the two assumptions are joined. However, the right projection's use of $x$ is strictly stronger than that of slices {\color{BrickRed}(A)}, {\color{NavyBlue}(B)}, and {\color{Mulberry}(D)}; we are \textit{aliasing} this variable's assumption. Slices {\color{BrickRed}(A)} and {\color{NavyBlue}(B)} are no longer minimal under this aliased assumption, meaning they are not subslices of the overall product. Slice {\color{Mulberry}(D)} is still valid because it does not actually use $x$.

\subsection{Term-Minimal Slices}
\label{sec:term-minimal-slices}

To rule out slices like (A) and (B) (\S\ref{sec:product-revisited}), we strengthen the notion of slices. We first calculate the minimal expression under \textit{maximal} typing assumptions, and then minimise the assumptions afterwards. By graduality, this is a minimal synthesis slice.

\begin{definition}[Term-minimal slice]
\label{def:term-minimal-slice}
A \textit{term-minimal slice} of $D : \synthesis{e}{\tau}$ on $\upsilon \in \lfloor \tau \rfloor$, written $\tmslice{D}{\upsilon}$, is a term slice $\varsigma \in \lfloor e \rfloor$ such that $\Delta;\,\Gamma \vdash \varsigma\;\synmode\;\psi$ for some $\psi \sqsupseteq \upsilon$ (with $\psi \in \lfloor \tau \rfloor$ by graduality, \cref{thm:graduality-syn}).
\end{definition}

Minimality lifts directly from the slice lattice on $e$ alone, and existence and monotonicity follow analogously to synthesis slices (\cref{thm:min-exists,thm:mono}).
\begin{definition}[Minimal Term-Minimal Slices]
\label{def:mintmslice}
A $\mintmslice{D}{\upsilon}$ is a $\tmslice{D}{\upsilon}$ minimal in $\tmslice{D}{\upsilon}$ under term-slice precision on $\lfloor e \rfloor$.
\end{definition}

Crucially, the two notions of minimality do not coincide: term-minimality is strictly stronger. Not all minimal slices are term-minimal under maximal assumptions.

\begin{counterexamplebox}

\begin{counterexample}[Minimal $\not\Rightarrow$ Term-Minimal]\label{lem:min-not-tmin}
Slices (A) and (B) (\S\ref{sec:product-revisited}) are minimal synthesis slices but are \textbf{not} term-minimal.
\end{counterexample}
\end{counterexamplebox}

\subsection{Calculating Term-Minimal Slices}
\label{sec:term-minimal-calculus}
The rest of this section develops a structural calculus that characterises minimal term slices via a judgement:
\[ \Gamma \vdash e \mathbin{\synmode} \tau \mathbin{\blacktriangleleft} \upsilon \;{\color{BrickRed}\rightsquigarrow}\; \varsigma \mathbin{\synmode} \psi \dashv \gamma, \]
read ``the typing $\Gamma \vdash e \synmode \tau$ at query $\upsilon$ produces term slice $\varsigma$, synthesises $\psi$, and uses assumptions $\gamma$''. The fourth output $\gamma$ is a minimal assumption slice to retrieve a minimal slice corresponding to the term-minimal slice.

\noindent The cases that deserve commentary are products, the binding constructs, function application, and type application.

\subsubsection{Products}
With term minimality, we now have that two minimal synthesis slices which are also term-minimal can be composed into a term-minimal, minimal synthesis slice. For a query $\upsilon_1\times \upsilon_2$ on the product, the projections are queried with $\upsilon_1$ and $\upsilon_2$ individually:

\[
\inference[\synr{Pair}]{
  \tmrule{e_1}{\tau_1}{\upsilon_1}{\varsigma_1}{\psi_1}{\gamma_1}
  &
  \tmrule{e_2}{\tau_2}{\upsilon_2}{\varsigma_2}{\psi_2}{\gamma_2}
}{\tmrule{(e_1, e_2)}{\tau_1 \times \tau_2}{\upsilon_1 \times \upsilon_2}{(\varsigma_1, \varsigma_2)}{\psi_1 \times \psi_2}{\gamma_1 \sqcup \gamma_2}}
\]

\subsubsection{Function Application}
\label{sec:term-min-app}

All the synthesised type information from a function application is derived from the function itself. We may simply omit the argument, and slice the function by $\gap \funarr \upsilon$:
\[
\inference[\synr{App}]{
  \analysis{e_2}{\tau_1}
  &
  \upsilon \neq \gap
  &
  \tmrule{e_1}{\tau_1 \funarr \tau_2}{\gap \funarr \upsilon}{\varsigma_{\mathit{fn}}}{\psi_1}{\gamma}
}{\tmrule{e_1\,e_2}{\tau_2}{\upsilon}{\varsigma_{\mathit{fn}}(\gap)}{\mathsf{cod}(\psi_1)}{\gamma}}
\]

\subsubsection{Binding Constructs}
\label{sec:term-min-binders}

Rules which bind variables are more complex, as they abstract a variable used in the minimised body. The general method is to slice them by first slicing the \textit{body}, extracting the assumption it needed on the bound variable from the minimal assumptions $\gamma$, then slice the definition / type annotation using this. 

Notice that this is the \textit{opposite} direction to typechecking, e.g. let bindings type the definition first, then use the result to type the body, but slicing slices the body first to determine the minimal assumption to slice the definition with.

\subsubsection{Type Application}
\label{sec:term-min-tapp}

Type application works by matching the type function's type against the query to collect a list of constraints on the type variable, the \textit{join} of which gives the sliced type argument $\hat\sigma$. The type function is then sliced with a matched query $\widehat{\upsilon}$, where each position corresponding to the type variable $\alpha$ in the type function's type is replaced by $\alpha$ itself and other positions are carried over from the query unchanged. Figure~\ref{fig:tapp-minsub} illustrates the construction on a query whose two subparts are matched with an $\alpha$ with differing constraints.

\begin{figure}[H]
\centering
$e \mathbin{\synmode} \forall\alpha.\,(\alpha \times \alpha \times \mathbf{Bool})$,\quad applied at\quad $\sigma = \mathbf{1} \funarr (\mathbf{Bool} \times \mathbf{Bool})$:

\medskip

So $e\langle\sigma\rangle \mathbin{\synmode} (\mathbf{1} \funarr (\mathbf{Bool} \times \mathbf{Bool})) \times (\mathbf{1} \funarr (\mathbf{Bool} \times \mathbf{Bool})) \times \mathbf{Bool}$.

\medskip

\begin{tabular}{r@{\;\;}l}
\textbf{Query}             & $\upsilon \;=\; \underbrace{\sli[teal]{\mathbf{1} \funarr \gap}}_{\alpha} \;\times\; \underbrace{\sli[teal]{\gap \funarr (\mathbf{Bool} \times \gap)}}_{\alpha} \;\times\; \gap$ \\[16pt]
\textbf{Matched query}     & $\widehat{\upsilon} \;=\; \alpha \;\times\; \alpha \;\times\; \gap$ \\[8pt]
\textbf{Constraints}       & $\mathbf{1} \funarr \gap \;\sqsubseteq\; \alpha \qquad \gap \funarr (\mathbf{Bool} \times \gap) \;\sqsubseteq\; \alpha$ \\[8pt]
\textbf{Constraints Join}  & $\hat\sigma \;=\; \mathbf{1} \funarr (\mathbf{Bool} \times \gap)$ \\
\end{tabular}
\caption{Type application query constraints.}
\label{fig:tapp-minsub}
\end{figure}

\subsection{Case Expressions}
\label{sec:fixed-point}

The case rule is the difficult one, appearing to require an iterative fixed point algorithm. This is because the two branch types join to produce the resulting type which is being queried upon. As synthesis slices are not exact, a query may be split into two disjoint types, yet a minimal slice of one branch may produce a more precise type than the query (overflow), which is also produced by a sibling branch. The sibling branch may then admit further slicing, as it no longer needs to provide this redundant part of the type. Lack of an upwards version of monotonicity (\cref{fig:syn-slice-refinement}) makes this difficult to solve in general without brute force.

However, when the branch slices exactly synthesise disjoint type queries, then there is no redundancy and the result is minimal. The scrutinee can be sliced using the join of the minimal assumptions on the branches, much like with let bindings. 

To express disjoint types we can use \textit{co-Heyting subtraction} $\upsilon_1 \heytingminus \upsilon_2$, which is the least slice whose join with $\upsilon_2$ recovers $\upsilon_1$, i.e. ``the part of $\upsilon_1$ not already covered by $\upsilon_2$'', e.g. $\sli{1 \times 1} \;\heytingminus\; \sli{1 \times}\,\gap \;=\; \gap\,\sli{\times 1}$.

Then, a disjoint cover of query $\upsilon$ is one such that $\upsilon\heytingminus \upsilon_1 = \upsilon_2$ and $\upsilon \heytingminus \upsilon_2 = \upsilon_1$. The case expression will necessarily synthesise $\upsilon_1 \sqcup \upsilon_2$ which equals $\upsilon$ by standard co-Heyting algebra rules. Note that the lattice is \textit{not} a boolean algebra, so $\upsilon_1 \sqcap \upsilon_2 \neq \gap$ in general.

The Hazel implementation simply does a disjoint split and ignores the overflow issue to work in linear time complexity. This is why it is an approximation. But, it is perfectly feasible to use this approximation and then continue with the brute force algorithm, as we expect the approximation to be very close to the minimum in almost all cases. Further, as overflow only occurs due to variable aliasing, the brute force search need only consider aliased variables at the leaves.

\subsection{Calculating Analysis Slices}
\label{sec:ana-slice-calculus}

Analysis slices are calculated directly from synthesis slices. The only rule that creates a type checking judgement is function application. Consider the analysis slice of a function argument with query type $\upsilon_1$; it is just the function application and a slice of the function under $\gap \to \upsilon_1$. That is, to explain why a function argument must be a given type, we ask why the corresponding function has that type as its domain.

However, to recursively calculate the analysis slice of an application when the focus is deep inside the argument, the analysis slice of the argument itself must be calculated, but its outer derivation mode is in \textit{analysis}. The analysis slices defined earlier (\S\ref{sec:ana-slice-def}) do not apply here.

We need an additional notion of \textit{inner} analysis slices. Dually to regular (outer) analysis slices, which treat the synthesis mode as an \textit{input}, the inner slices treat the analysis derivation mode as an \textit{output}, and attempt to minimise it. 

Intuitively, an inner analysis slice gives the minimal assumptions, context, and additionally the minimum \textit{analysis type} to be imposed on it by the outer context, such that the queried analysis type holds at its focus. In particular, slicing the function in an application would use this minimal outer analysis mode type on the argument to slice the domain type of the function.
\vspace{-0.5em}
\begin{definition}
\label{def:ana-pos-slice}
An \textit{inner analysis slice} of $\mathit{Cls} : \ctxclass{\C}{\anamode\,\tau_p}{\Delta';\,\Gamma'}{\anamode\,\tau}$ on $\upsilon \in \lfloor\tau\rfloor$, written $\anaposslice{\mathit{Cls}}{\upsilon}$, is a triple $(\kappa, \gamma, \upsilon_{\mathrm{outer}}) \in \lfloor\C, \Gamma, \tau_p\rfloor$ such that
\[
\ctxclass[\Delta;\,\gamma]{\kappa}{\anamode\,\upsilon_{\mathrm{outer}}}{\Delta'';\,\Gamma''}{\anamode\,\phi}
\]
for some inner-type witness $\phi \sqsupseteq \upsilon$ (with $\phi \in \lfloor\tau\rfloor$ by graduality, \cref{thm:graduality-ana-cls}).
\end{definition}
\vspace{-0.5em}

\noindent A minimal inner analysis slice minimises the whole triple, $\upsilon_{\mathrm{outer}}$ included (so that an application slices the function's domain by no more than the argument requires).

\section{Full Type Slices -- Slicing Assumption Definitions}
\label{sec:full-slices}
Synthesis slices are explained using a set of \textit{minimal assumptions}. However, in real code, these are actually \textit{provided by the surrounding context} (or from external code). Providing these minimal assumptions via a full slice of this context can be very meaningful. For example, when debugging a type error, the error itself may be \textit{non-local}, within a variable definition elsewhere, giving one of the used variables the wrong type. In this situation the error is in the definition, and we could debug it by inspecting (and slicing) that definition.

A \textit{full type slice} is a fully sliced decomposition of a program around a focus, which can be pieced back together to form a well-typed program with the same (or stronger) typing properties at the focus as some given query:

\vspace{-0.5em}
\begin{definition}[Type slices]
  \label{def:type-slice}
A \textit{type slice} of $\mathit{Cls} : \ctxclass{\C}{\synmode\,\tau_p}{\Delta';\,\Gamma'}{\synmode\,\tau}$ and $D : \synthesis[\Delta';\,\Gamma']{e}{\tau}$ on $\upsilon \in \lfloor\tau\rfloor$, written $\syntypeslice{(\mathit{Cls},D)}{\upsilon}$, is a pair $((\kappa,\gamma),(\gamma_f,\varsigma)) \in \lfloor\C,\Gamma\rfloor \times \lfloor\Gamma',e\rfloor$ such that $(\gamma_f,\varsigma) : \synslice{D}{\upsilon}$, and $\ctxclass[\Delta;\,\gamma]{\kappa}{\synmode\,\psi}{\Delta';\,\gamma_f'}{\synmode\,\phi'}$ for some $\psi$, and $\gamma_f' \sqsupseteq \gamma_f$ such that $\synthesis[\Delta';\ \gamma_f']{\varsigma}{\phi'}$.
\end{definition}
\vspace{-0.5em}

Importantly, the pair of slices forms a well-typed program: by context typing soundness, $\kappa[\varsigma]$ synthesises $\psi$. Then, a \textit{minimal} type slice is one in which the context $\kappa$ and the focused expression $\varsigma$ are jointly minimised alongside minimal external typing assumptions of the context typing $\gamma$. Notably, we do \textit{not} minimise the assumptions to the focused term, as its assumptions are now explicitly provided by the surrounding context (and $\gamma$).

As slices can be inexact, efficiently calculating a minimal context providing sufficient assumptions to a minimal \textit{SynSlice} at the focus is non-trivial. The Hazel implementation instead closely approximates this by slicing the definition / annotation on each binding by its required type as per the minimal assumptions of the focused term. When each slice of these definitions is exact, the resulting type slice is minimal.

For terms in analysis mode, the full type slice is just the analysis slice, as the empty focus (with type $\gap$) analyses successfully against any type.

\section{Minimum-Size Slices}
\label{sec:min-size}

Minimal slices are not unique, and some are actually larger than others in the sheer number of terms highlighted. So, one might rank them by \textit{size}: the number of highlighted positions, asking for a \textit{minimum-size} slice. For this core calculus, finding such a slice is \textbf{NP-hard} in general. This implies that an algorithm attempting to also reduce the absolute size of slices should use heuristics to avoid or reduce the likelihood of cases with exponential time complexity.

We can construct a reduction from the NP-hard set cover optimisation problem as follows, and shown in \cref{fig:nphard}:
\begin{itemize}
  \item Encode a universe $U = \{1,\dots,n\}$ as a type $\tau = 1 \times \cdots \times 1$ ($n$ factors, one position per element)
\item Encode each of $m$-many sets $S_j$ in the set cover problem by an assumption $x_j : T_j \sqsubseteq \tau$ in $\Gamma$ with $T_j$ defined by the $i$th projection of $T_j$ being $1$ if and only if $i \in S_j$.
\item Construct a case tree with empty scrutinees and placing all $(x_j, I_j)$, defining $I_j$ as the $m$-ary product with $j$th projection being $1$ and all others $\gap$.
\end{itemize}

Then, if a synthesis slice is queried on this case tree with $\tau \times 1 \times \dots \times 1$, each $I_j$ must be retained, so the whole tree structure must be retained, and the only variance in the size of the slices is purely dependent on which variables $x_j$ were retained.

As we queried $\tau$ in the first projection then $\bigsqcup_{j=1}^m{T_j} = \tau$, which implies there are $m$ subsets\footnote{Subsets, as a minimal slice can use less precise assumptions $x_j$, i.e. $x_j : T'_j \sqsubseteq T_j$} $S_j' \subseteq S_j$ such that $\bigcup_{j=1}^mS'_j = U$. Hence, the retained $S_j$ form a set-cover. Finally, the only variance in the size of the slice was in the number of variables $x_j$ chosen, i.e. the number of sets chosen; hence, a minimum-size synthesis slice corresponds to a minimum set cover.

\begin{figure}[ht]
\centering
\begin{tikzpicture}[scale=0.92, transform shape,
    sc/.style={draw, rounded corners=2pt, inner sep=2pt, minimum height=4mm, font=\footnotesize},
    sel/.style={fill=OliveGreen!18, draw=OliveGreen!60!black, line width=0.7pt},
    no/.style={fill=black!5, draw=black!40},
    ca/.style={font=\bfseries\footnotesize, inner sep=1pt},
    lf/.style={draw, rounded corners=1pt, inner sep=2pt, minimum width=5mm, minimum height=4mm, font=\small},
    kp/.style={draw=OliveGreen!60!black, fill=OliveGreen!18, line width=0.7pt},
    dr/.style={draw=black!40, fill=black!4, text=black!50},
    every node/.append style={font=\small},
    line cap=round]

\node[font=\itshape\small] at (-0.6,0.6) {$U$};
\node[sc] at (0.0,0.6) {1};
\node[sc] at (0.8,0.6) {2};
\node[sc] at (1.6,0.6) {3};
\node[sc] at (2.4,0.6) {4};

\node[font=\itshape\small] at (-0.6,0.0) {$S_1$};
\node[sc, sel, minimum width=14mm] at (0.4,0.0) {\{1,2\}};
\node[font=\itshape\small] at (-0.6,-0.6) {$S_2$};
\node[sc, no,  minimum width=14mm] at (1.2,-0.6) {\{2,3\}};
\node[font=\itshape\small] at (-0.6,-1.2) {$S_3$};
\node[sc, sel, minimum width=14mm] at (2.0,-1.2) {\{3,4\}};
\node[font=\itshape\small] at (-0.6,-1.8) {$S_4$};
\node[sc, no,  minimum width=5mm]  at (0.0,-1.8) {1};
\node[sc, no,  minimum width=5mm]  at (2.4,-1.8) {4};

\node[font=\footnotesize\itshape, text=gray!50!black, align=center] at (1.2,-2.5)
  {min cover: $\{S_1,S_3\}$};

\node[font=\footnotesize, align=left] at (5.0,0.6)  {$\tau \;=\; 1 \times 1 \times 1 \times 1$};
\node[font=\footnotesize, align=left] at (5.0,0.0)  {$T_1 \,=\, 1 \times 1 \times \gap \times \gap$};
\node[font=\footnotesize, align=left] at (5.0,-0.6) {$T_2 \,=\, \gap \times 1 \times 1 \times \gap$};
\node[font=\footnotesize, align=left] at (5.0,-1.2) {$T_3 \,=\, \gap \times \gap \times 1 \times 1$};
\node[font=\footnotesize, align=left] at (5.0,-1.8) {$T_4 \,=\, 1 \times \gap \times \gap \times 1$};
\node[font=\footnotesize\itshape, text=gray!50!black, align=center] at (5.0,-2.5)
  {encoding $S_j \to T_j$};

\node[ca] (rt) at (10.5, 0.8) {case};
\node[ca] (cl) at ( 9.1, 0.0) {case};
\node[ca] (cr) at (11.9, 0.0) {case};
\node[lf, kp] (l1) at ( 8.4,-0.8) {$\ell_1$};
\node[lf, dr] (l2) at ( 9.8,-0.8) {$\gap$};
\node[lf, kp] (l3) at (11.2,-0.8) {$\ell_3$};
\node[lf, dr] (l4) at (12.6,-0.8) {$\gap$};
\draw (rt) -- (cl);
\draw (rt) -- (cr);
\draw (cl) -- (l1);
\draw (cl) -- (l2);
\draw (cr) -- (l3);
\draw (cr) -- (l4);
\node[font=\scriptsize, text=BrickRed, anchor=south] at (rt.north) {$T_1 \sqcup T_3 = \tau$};
\node[font=\footnotesize\itshape, text=gray!50!black, align=center] at (10.5,-1.7)
  {kept $J = \{x_1, x_3\}$};
\node[font=\footnotesize, align=center] at (10.5,-2.8)
  {$\ell_1 = (x_1,\; {1} \times \gap \times \gap \times \gap)$\\
   $\ell_3 = (x_3,\; \gap \times \gap \times {1} \times \gap)$};
\end{tikzpicture}
\caption{NP-Hardness of Minimum-Size Slicing}
\label{fig:nphard}
\end{figure}

\section{Related Work}
\label{sec:comparisons}
\label{sec:related-work}

Program slicing methods have been explored extensively, in various forms and for many applications~\cite{TipSurvey1995}. Here we compare type slicing with these, being loosely related in foundation to \emph{Galois slicing}~\cite{Perera2012}, and in application to \emph{type error slicing}~\cite{Wand1986,HaackWells2003}, before relating to the broader type debugging literature (\S\ref{sec:cmp-debuggers}).

\subsection{Classical Program Slicing}
\label{sec:cmp-program-slicing}

    Classical program slicing~\cite{TipSurvey1995} was first introduced by Weiser~\cite{Weiser1984}. Here, a slice is a subprogram preserving the value of a chosen variable at a chosen program point: a `criterion'. Similarly to type slicing, the subprogram is a syntactically well-formed program which can be analysed (evaluated).
The criterion is superficially similar to type slicing queries, with two significant differences. First, a type query is itself a slice of a type, which has \textit{compound} structure, rather than a set of plain variables. Second, type slices cannot in general \textit{preserve} all queries exactly (\cref{lem:exact-not-exist}), so we ask only for \textit{validity} (\cref{def:syn-slice}), which remains a `sufficient' explanation.

Classical slicing distinguishes \textit{static} slices (preserved over all executions) from \textit{dynamic} slices (over one fixed execution)~\cite{KorelLaski1988}. Type slicing can be viewed as a \textit{dynamic} slicing method, as it fixes a particular typing derivation; however, most type systems, including ours, have no non-trivial non-determinism, making this distinction largely irrelevant.

\subsection{Galois Slicing}
\label{sec:cmp-galois}

Galois slicing~\cite{Perera2012} is a lattice-theoretic approach to dynamic program slicing that has been applied and related to a wide range of calculi and user interfaces~\cite{Ricciotti2017,PereraGarg2016,PereraVis2022,AtkeyPerera2025}. Type slicing shares the foundational use of lattices and precision, but differs decisively in the lack of a \textit{Galois connection}. 

In Perera's setting, a program evaluation lifts to a \textit{meet-preserving} map $\mathit{eval}_{\rho,e} : \lfloor \rho, e\rfloor \to \lfloor v\rfloor$ between the slice lattices of partial programs and partial values. By an adjoint functor theorem~\cite[Prop.~7.34]{DaveyPriestley}, such a map is the upper adjoint of a unique Galois connection, whose lower adjoint $\mathit{uneval}$ is the `backward slice function' sending a partial-value query to a \textit{least} program slice evaluating to it.

For total type systems assigning unique types like the core calculus (\S\ref{sec:core-calculus}), the natural type-slicing analogue of $\mathit{eval}_{\rho,e}$ maps partial assumptions and expressions to partial types by type checking.
\[
  \phi_D : \lfloor \Gamma, e \rfloor \to \lfloor \tau \rfloor
\]

However, this `forward map' $\phi_D$ is \textit{not meet-preserving} in general. Consider $D : \synthesis{(\mathbf{case}\;e\;\mathbf{of}\;x_1.\,e_1 \mid x_2.\,e_2)}{\tau}$ in which both branches independently synthesise the same type $\tau \sqsupset \gap$, and take the two slices $s_1$ keeping only the first branch and $s_2$ keeping only the second. Both are defined, yielding $\phi_D(s_1) = \phi_D(s_2) = \tau$, but their meet $s_1 \sqcap s_2$ omits both branches, giving $\phi_D(s_1 \sqcap s_2) = \gap \neq \tau = \phi_D(s_1) \sqcap \phi_D(s_2)$.

\subsection{Type Error Slicing}
\label{sec:cmp-tes}

Type error slicing, like type slicing, is a static analysis on a typing problem. Both share the broad goal of attributing type-level information back to source positions, but differ in language target, the underlying mathematical structure, and the scope.

Wand~\cite{Wand1986} and Haack and Wells~\cite{HaackWells2003} work in a globally inferred Hindley-Milner system; the program is reduced to a system of equality constraints over type variables. Wand's instrumented unification reports the constraint sites contributing to a failure, with no minimality guarantee; Haack and Wells compute a type error slice as a \emph{minimal} unsatisfiable subset of the constraint set, projected back to the corresponding minimal subset of source positions. Skalpel~\cite{Skalpel2017} matures and scales this approach to Standard ML, and the Chameleon debugger~\cite{StuckeyChameleon2003} underlines the same minimal location sets interactively.

These slices are not structure-preserving programs but flat sets of constraints and sources, so the results cannot themselves be type checked or executed. Tip and Dinesh~\cite{TipDinesh2001} are the exception: slicing explicitly typed languages by dependence tracking in a rewriting-based checker, their slices are programs guaranteed to reproduce the same error when type checked.

The scope also differs: type error slicing answers \emph{``why does this program fail to type-check?''}, whereas type slicing answers \emph{``why does this expression have this type?''}. Via marking (\S\ref{sec:marking})~\cite{Marking2024}, which elaborates every well-formed expression into a totally typed derivation, type slicing applies also to ill-typed programs. So, here, the question of ``why this type?'' subsumes ``why this fails to type-check?''.

\subsection{Type-Directed Slicing}
\label{sec:cmp-type-directed}

\citet{TypeDirectedSlicing} recently independently introduced \emph{type-directed slicing}, implemented for Java in the tool \textsc{Specimin}. Their motivation and application is more specific than ours: their slices serve the \emph{maintainers of a typechecker}, reducing a large program exhibiting a known bug to a small test case reproducing it.

Given a target program element and a typechecker specified by its type rules, a type-directed slicer statically computes a smaller program on which the typechecker behaves identically at the target: essentially type slicing restricted to maximal queries (no partial queries). However, the approach, theory, and proofs differ greatly, as do the languages considered, with ours extending to bidirectionally inferred types, applying to programs with fewer annotations and better support for first-class functions. On the other hand, their work applies in practice to a wider range of analyses, including exceptions, crashes, and nullness analysis.

Their work was empirically evaluated on historical bugs of \texttt{javac}, NullAway~\cite{NullAway}, and the Checker Framework~\cite{CheckerFramework}, suggesting our theory is similarly applicable to such real world use cases; conversely, partial queries may be a worthwhile extension to their tools, especially when a type checker produces multiple cascading bugs.

\subsection{Type Debuggers \& Type Explanation}
\label{sec:cmp-debuggers}

Beyond slicing, there is a broad literature on explaining, localising, and repairing type errors. The oldest line \emph{explains} types: Beaven and Stansifer~\cite{BeavenStansifer1993} and Duggan and Bent~\cite{DugganBent1996} generate textual explanations from instrumented unification, \citet{YangMichaelsonTrinder2002} explain inferred polymorphic types of well-typed programs, and \citet{Chitil2001} navigates compositional explanation graphs via algorithmic debugging. These systems answer the same question as a synthesis slice---why does this term have this type?---but produce prose or inference traces, whereas a type slice is itself a minimal well-formed program that can be re-checked and further refined. Interactive debuggers similarly answer such queries: Chameleon~\cite{StuckeyChameleon2003} over a constraint store for Haskell, reporting errors as minimal location sets as in \S\ref{sec:cmp-tes}, and Argus~\cite{Argus2025} for Rust trait errors in the IDE.

A larger body of work improves error \emph{reports} for globally inferred languages: constraint-solving heuristics in Helium~\cite{HeerenHageSwierstra2003}, ranking likely error causes over constraint graphs~\cite{ZhangMyers2014,SHErrLoc2015}, minimum error sources via MaxSMT~\cite{PavlinovicKingWies2014,PavlinovicICFP2015}, black-box correcting sets~\cite{Mycroft2016}, subtyping-constraint flow messages~\cite{FlowMessages2023}, dynamic witnesses~\cite{SeidelJhalaWeimer2016}, and learned localisation~\cite{NATE2017} and repair~\cite{Rite2020}. These techniques answer ``where is the error?'' or ``how might it be fixed?'' rather than ``why this type?''; empirically, most type errors implicate multiple program points~\cite{StudentTypeErrorFixes}, which type slicing surfaces directly.

As slices are themselves programs, these tools compose with type slicing (\S\ref{sec:intro}).

\section{Further Extensions \& Applications}
\label{sec:further-work}

\subsection{Set-Theoretic Types}
\label{sec:set-theoretic}
\label{sec:future-set-theoretic-occurrence}

Type slicing looks especially promising for common gradual type systems in which \textit{control flow and types are intertwined}: an expression's type depends non-trivially on the branches inside a function, or the control-flow path in which the expression lies. Explaining a type then genuinely requires considering the branches within the code, rather than just annotations and literals; as type slices necessarily \textit{retain} program structure, they can retain the appropriate control flow structure.

\emph{Set theoretic types} extend the type language with union, intersection, and negation, letting a term carry several types simultaneously, typing overloaded functions and typecase constructs common in real world hybrid typed languages (e.g.\ TypeScript, or Elixir~\cite{ElixirTypes}). Relevant foundations, including graduality proofs, have been explored~\cite{CastagnaLanvin, CastagnaLPS, AngeloThesis}, ready for type slicing to build upon.\footnote{After adapting into a bidirectional setting with expression holes.}

To demonstrate this application, consider generalising sum types of the core calculus (\S\ref{sec:core-calculus}) to allow sum injections to synthesise partially dynamic types: when $\synthesis{e}{\tau}$, let $\synthesis{\iota_1\; e}{\tau + \gap}$ and $\synthesis{\iota_2\; e}{\gap + \tau}$.
Then an if statement $\mathit{pred} \triangleq \mathbf{if}\ x > 0\ \mathbf{then}\ \iota_1(x - 1)\ \mathbf{else}\ \iota_2(())$ synthesises a sum type $\mathbb{N} + 1$ without requiring annotations, much as union type inference works (i.e. union type $\mathbb{N}\ |\ 1$), and type slicing can explain where each component comes from: querying $\gap + 1$ highlights exactly the failing branch, and $\mathbb{N} + \gap$ the succeeding one. We mechanised this example.
\subsection{Occurrence Typing}
\label{sec:occurrence}

\emph{Occurrence typing}~\cite{OccurrenceTyping} is another common technique in hybrid type systems. Inside the true branch of \texttt{if (typeof x = int) ... else ...}, the type of \texttt{x} is refined as an \texttt{int}. These explicitly require exploring the surrounding control flow, which can be extensive, to reason about the types. As type slices necessarily retain program structure, they can provide a reduced slice with minimal control flow to reason within. Formally applying type slicing to such systems will, however, need extensive integration work with bidirectional typing and static graduality.

\subsection{Implicit Polymorphism}
\label{sec:future-implicit}

Globally inferred languages have much more confusing typing derivations and errors than locally inferred ones, so extending the calculus to implicit polymorphism would expand the application significantly. Explaining even well-typed code is particularly useful here: error marks are often mis-localised, leaving the real error in well-typed code whose types conflict with the programmer's expectation. Adapting the constraint slicing of Wand~\cite{Wand1986} and Haack and Wells~\cite{HaackWells2003} to integrate with type slicing is worth exploring.

\subsection{Cast Slicing \& Dynamics}
\label{sec:cast-dynamics}

\noindent
\begin{minipage}[t]{0.58\linewidth}
\vspace{0pt}
A gradually-typed language elaborates programs into a cast calculus and runs the casts dynamically. Type slices can be attached to those casts, letting casts explain their insertion by pointing back to code: the source is a slice of the term being cast, the target a slice of its context.

\hspace*{9pt}This is useful for debugging dynamic type errors, which would not otherwise point back to code: gradual typing's blame~\cite{WadlerFindler2009} identifies \emph{which} cast failed but implicates only the cast site; a slice widens this to every source fragment contributing to either side.
\end{minipage}\hfill
\begin{minipage}[t]{0.38\linewidth}
\vspace{0pt}
\centering
\includegraphics[width=0.85\linewidth]{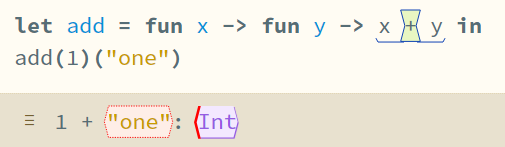}
\captionof{figure}{Type slice associated with a cast error, from a preliminary workshop paper~\hatraSelfCite{} built on differing foundations~\priorDissertationSelfCite{}.}
\label{fig:cast-slice}
\end{minipage}

\vspace{3pt}

However, slices are no longer slices of the original program once dynamic reductions are performed. Tracking dynamic execution, perhaps integrating ideas from Perera~\cite{Perera2012,PereraGarg2016}, would allow tracing casts back to source with stronger guarantees; dynamic slices measurably aid diagnosis of run-time type errors in gradual languages~\cite{TypeSlicer2025}.

\section{Conclusions}
\label{sec:conclusions}

This paper formulated type slicing theory for a bidirectional calculus with explicit System~F polymorphism, bindings, products, and sums, mechanised in Agda, and formally integrated it with the error-marking calculus of \citet{Marking2024}, applying it to ill-typed programs and justifying its \textit{sound} use in debugging \textit{all} static type errors. This positions type slicing as a promising tool for practical languages, especially gradually typed ones, explaining both static and, in follow-up work, \textit{dynamic} types.

\label{lastcontentpage}

\end{document}